\def\indist{\rightsquigarrow}
\def\ind{\perp\!\!\!\perp}
\def\T{{ \mathrm{\scriptscriptstyle T} }}
\newcommand{\var}{\text{var}}
\newcommand{\cov}{\text{cov}}
\newcommand{\Pb}{\mathbb{P}}
\newcommand{\Pn}{\mathbb{P}_n}
\newcommand{\E}{\mathbb{E}}
\newcommand{\R}{\mathbb{R}}
\newcommand{\bO}{\mathbf{O}}
\newcommand{\bo}{\mathbf{o}}
\newcommand{\bX}{\mathbf{X}}
\newcommand{\bx}{\mathbf{x}}
\newcommand{\bY}{\mathbf{Y}}
\newcommand{\by}{\mathbf{y}}
\newcommand{\bR}{\mathbf{R}}
\newcommand{\bbeta}{\boldsymbol{\beta}}
\newcommand{\bpsi}{\boldsymbol{\psi}}
\newcommand{\bphi}{\boldsymbol{\phi}}
\def\black{\textcolor{red}}
\def\black{\textcolor{black}}
\DeclareSymbolFont{bbold}{U}{bbold}{m}{n}
\DeclareSymbolFontAlphabet{\mathbbold}{bbold}
\newcommand{\one}{\mathbbold{1}}
\newtheorem{theorem}{Theorem}
\newtheorem{lemma}{Lemma}
\newtheorem{corollary}{Corollary}
\newtheorem{proposition}{Proposition}
\theoremstyle{definition}
\theoremstyle{remark}
\begin{document}

\def\spacingset#1{\renewcommand{\baselinestretch}%
{#1}\small\normalsize} \spacingset{1}


  \title{ Efficient Nonparametric Causal Inference \\ with Missing Exposure Information}
  \author{\\ Edward H. Kennedy \\ \\
    Department of Statistics \& Data Science, \\
    Carnegie Mellon University \\ \\ 
    \texttt{edward@stat.cmu.edu} \\
    }
  \maketitle
  \thispagestyle{empty}

\begin{abstract}
Missing exposure information is a very common feature of many observational studies.  Here we study identifiability and efficient estimation of causal effects on vector outcomes, in such cases where treatment is unconfounded but partially missing. We consider a missing at random setting where missingness in treatment can depend not only on complex covariates, but also on post-treatment outcomes. We give a new identifying expression for average treatment effects in this setting, along with the efficient influence function for this parameter in a nonparametric model, which yields a nonparametric efficiency bound. We use this latter result to construct nonparametric estimators that are less sensitive to the curse of dimensionality than usual, e.g., by having faster rates of convergence than the complex nuisance estimators they rely on. Further we show that these estimators can be root-n consistent and asymptotically normal under weak nonparametric conditions, even when constructed using flexible machine learning. Finally we apply these results to the problem of causal inference with a partially missing instrumental variable.
\end{abstract}

\section{Introduction}

It is very common for there to be missing data in observational studies where causal effects are of interest. In this paper we consider  studies where there is substantial missingness in an exposure variable. This is a very common feature of observational studies, and examples abound in the literature. For example, \citet{zhang2016causal} described the Consortium on Safe Labor observational study, where the goal was to estimate effects of mothers' body mass index on infant birthweight. There, covariate and outcome information was essentially always available, but body mass index data was only available for about half of the mothers. \citet{shortreed2010missing} used the Framingham Heart Study data to assess effects of physical activity on cardiovascular disease and mortality, but up to 30\% of subjects were missing physical activity information. Similarly, \citet{ahn2011missing} used the Molecular Epidemiology of Colorectal Cancer study to estimate effects of physical activity on colorectal cancer stage, but 20\% of subjects were missing physical activity data. \citet{shardell2014statistical} described an analysis of the Baltimore Hip Studies involving older adults with hip fractures, where the goal was to assess effects of perceived mobility recovery on survival. However, this self-reported mobility measure was unavailable for 27\% of subjects. \citet{molinari2010missing} and \citet{mebane2013causal} give numerous other examples of studies with missing exposure information,  particularly in survey settings, e.g., from the National Longitudinal Survey of Youth, and the Health and Retirement Study. This is certainly a prevalent problem. 

In fact the problem is even more widespread, since in instrumental variable studies one can view the instrument as a type of exposure (e.g., for the purpose of estimating intention-to-treat-style effects, as well as other instrumental variable estimands that require estimating instrument effects on both treatment and outcome). And it is similarly common for instrument values to be missing. For example, in a Mendelian randomization context \citet{burgess2011missing} used genetic variants as instrumental variables to study the effect of C-reactive protein on fibrinogen and coronary heart disease. However, data on these variants was missing for up to 10\% of subjects, due to difficulty in interpreting output from genotyping platforms. \citet{mogstad2012instrumental} and \citet{chaudhuri2016gmm} give further examples of missing instruments from economics. 

Although missing exposures and instruments are a prevalent problem, the proposed methods for dealing with this issue have relied on potentially restrictive modeling assumptions, and have been somewhat ad hoc in not considering optimal efficiency. For example \citet{williamson2012doubly} and \citet{zhang2016causal} propose interesting semiparametric estimators, but they rely on parametric models for nuisance functions, and do not consider the question of efficiency in either nonparametric or semiparametric models. \citet{zhang2016causal}  also only considers binary outcomes. \citet{chaudhuri2016gmm} discusses (semiparametric) efficiency theory, but only for a finite-dimensional parameter in a population moment condition depending on a known function. This means their results apply to classical linear models, but not to the fully nonparametric setting pursued here. \citet{kennedy2017paradoxes} consider nonparametric efficiency theory in missing instrumental variable problems, but only in simpler settings with one-sided noncompliance and no covariates. 

Thus we fill these gaps by giving a new identifying expression for average treatment effects \black{of multivalued discrete exposures} in the presence of complex confounding and missing exposure values, deriving the efficient influence function and corresponding nonparametric efficiency bounds, and constructing nonparametric estimators that can be $\sqrt{n}$-consistent and asymptotically normal, even if nuisance functions are estimated at slower rates via nonparametric machine learning tools. Finally we apply these general results to also address the problem of causal inference with a partially missing instrumental variable. Throughout we make use of a missing at random assumption used by previous authors, allowing exposure missingness to depend on post-exposure outcome information.

\section{Missing Exposures}

In this section we consider the general problem of identification and efficient estimation of average treatment effects, when exposure values are missing at random, allowing the missingness mechanism to depend on both covariates and post-treatment outcome information.

\subsection{Setup}

Suppose we observe an iid sample $(\bO_1,...,\bO_n) \sim \Pb$ where
$$ \bO = (\bX, R, RZ, \bY) $$
for $\bX \in \R^d$ denoting covariate information, $Z \in \{z_1,...,z_k\}$ a discrete treatment or exposure, $R \in \{0,1\}$ an indicator for whether $Z$ is observed or not, and $\bY = (Y_1,...,Y_p)^\T \in \R^p$ a vector of $p$ outcomes of interest. In general we use script characters to denote the support of a random variable, e.g., $\bX \in \mathcal{X} \subseteq \R^d$. For notational simplicity we further define the nuisance functions
\begin{align*}
\mu(\by \mid \bx) &= \Pb(\bY \leq \by \mid \bX=\bx) \\
\pi(\bx,\by) &= \Pb(R=1 \mid \bX=\bx, \bY=\by) \\
\lambda_z(\bx,\by) &= \Pb(Z=z \mid \bX=\bx,\bY=\by, R=1) .
\end{align*}
Note that $\mu$ is the cumulative distribution function of the outcome given covariates, $\pi$ can be viewed as the missingness propensity score,  and $\lambda$ the regression on covariates and outcomes of treatment among those for whom it is measured. We further define
\begin{align*}
\bbeta_z(\bx) &= \E\{ \bY \lambda_z(\bX, \bY) \mid \bX=\bx\} = \int_\mathcal{Y} \by \lambda_z(\bx,\by) \ d\mu(\by \mid \bx) \\
\gamma_z(\bx) &= \E\{ \lambda_z(\bX, \bY) \mid \bX=\bx\} = \int_\mathcal{Y} \lambda_z(\bx, \by) \ d\mu(\by \mid \bx) .
\end{align*}
The quantity $\bbeta_z(\bx) = \{ \beta_z^1(\bx),...,\beta_z^p(\bx) \}^\T$ is a vector of the same dimension as $\bY$. We will see shortly that, under missing at random assumptions, $\gamma_z$ equals the propensity score, while $\bbeta_z$ equals the product of the propensity score and outcome regression. 

Our goal is to estimate the mean $\bpsi_z = \E(\bY^z) = \{ \E(Y_1^z), ..., \E(Y_p^z) \}^\T \in \R^p$ of the outcomes that would have been observed under treatment level $z \in \mathcal{Z}$. It is well-known that this equals
\begin{equation} \label{eq:psi2}
\bpsi_z = \int_{\mathcal{X}} \E(\bY \mid \bX=\bx, Z=z) \ d\Pb(\bx) 
\end{equation}
under the following standard causal assumptions:
\begin{align}
\text{(Consistency.) } \ & \bY=  \bY^Z \tag{A1}  \label{A1} \\
\text{($Z$-Positivity.) } \ & \Pb\{ \epsilon < \Pb(Z=z \mid \bX) < 1-\epsilon\} = 1 \text{ for all } z \in \mathcal{Z} \tag{A2}  \label{A2} \\
\text{($Z$-Exchangeability.) } \ & Z \ind \bY^z \mid \bX \text{ for all } z \in \mathcal{Z} \tag{A3} \label{A3}
\end{align}
These assumptions have been discussed extensively in the literature \citep{imbens2004nonparametric, van2003unified}, so we refer the reader elsewhere for more details.  

Crucially, when treatment $Z$ is missing for some subjects, expression \eqref{eq:psi2} is still not identified even under \eqref{A1}--\eqref{A3}, since $Z$ is not observed unless $R=1$. We consider identification under missing at random conditions used for example by \citet{williamson2012doubly, chaudhuri2016gmm} and \citet{zhang2016causal}, which are:
\begin{align}
\text{($R$-Exchangeability.) } \ & R \ind Z \mid \bX, \bY \tag{A4} \label{A4} \\
\text{($R$-Positivity.) } \ & \Pb\{ \epsilon < \Pb(R=1 \mid \bX, \bY) < 1-\epsilon\} = 1 \tag{A5}  \label{A5}
\end{align}
Note that the missing at random condition \eqref{A4} allows missingness in treatment $Z$ to depend on the post-treatment outcome $\bY$; this will be important if the outcome captures some information about the missingness mechanism beyond the covariates. An alternative missing-at-random assumption would be $R \ind (Z,\bY) \mid \bX$; note however that this implies our $R$-exchangeability assumption, as well as the further testable implication that $R \ind Y \mid \bX$. Therefore our assumption is strictly weaker. 

\black{Figure \ref{dags} uses directed acyclic graphs to illustrate two different data generating processes that satisfy exchangeability conditions \eqref{A3} and \eqref{A4}. The first represents a process where missingness occurs prior to the outcome (e.g., subjects miss a visit when they would have contributed treatment information); the second represents a process where missingness occurs after the outcome (e.g., survey non-response or data corruption after measurement). }

\begin{figure}[h!]
\centering
\subcaptionbox{\label{pre}}
{\includegraphics[width=0.45\textwidth]{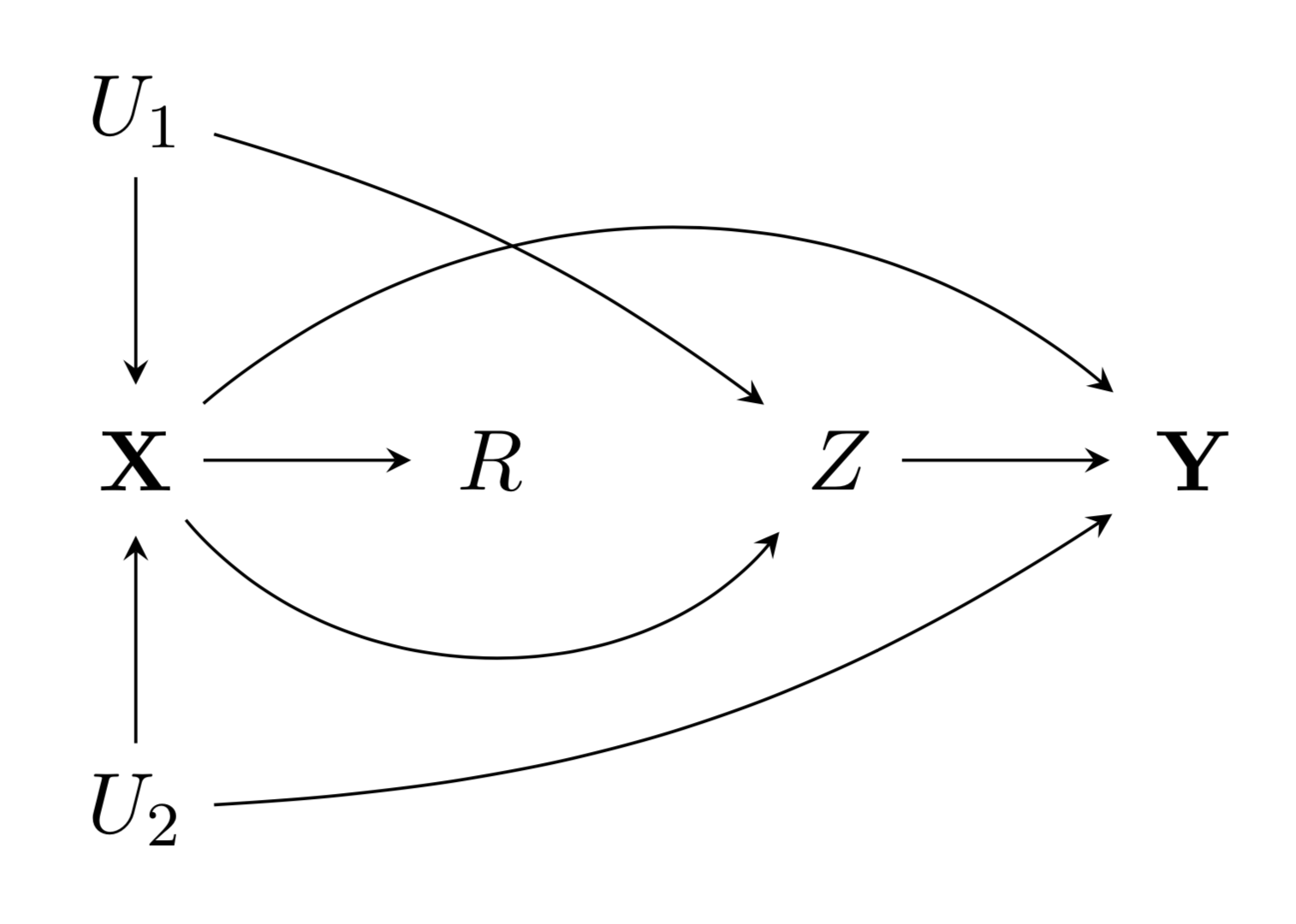}}
\subcaptionbox{\label{post}}
{\includegraphics[width=0.45\textwidth]{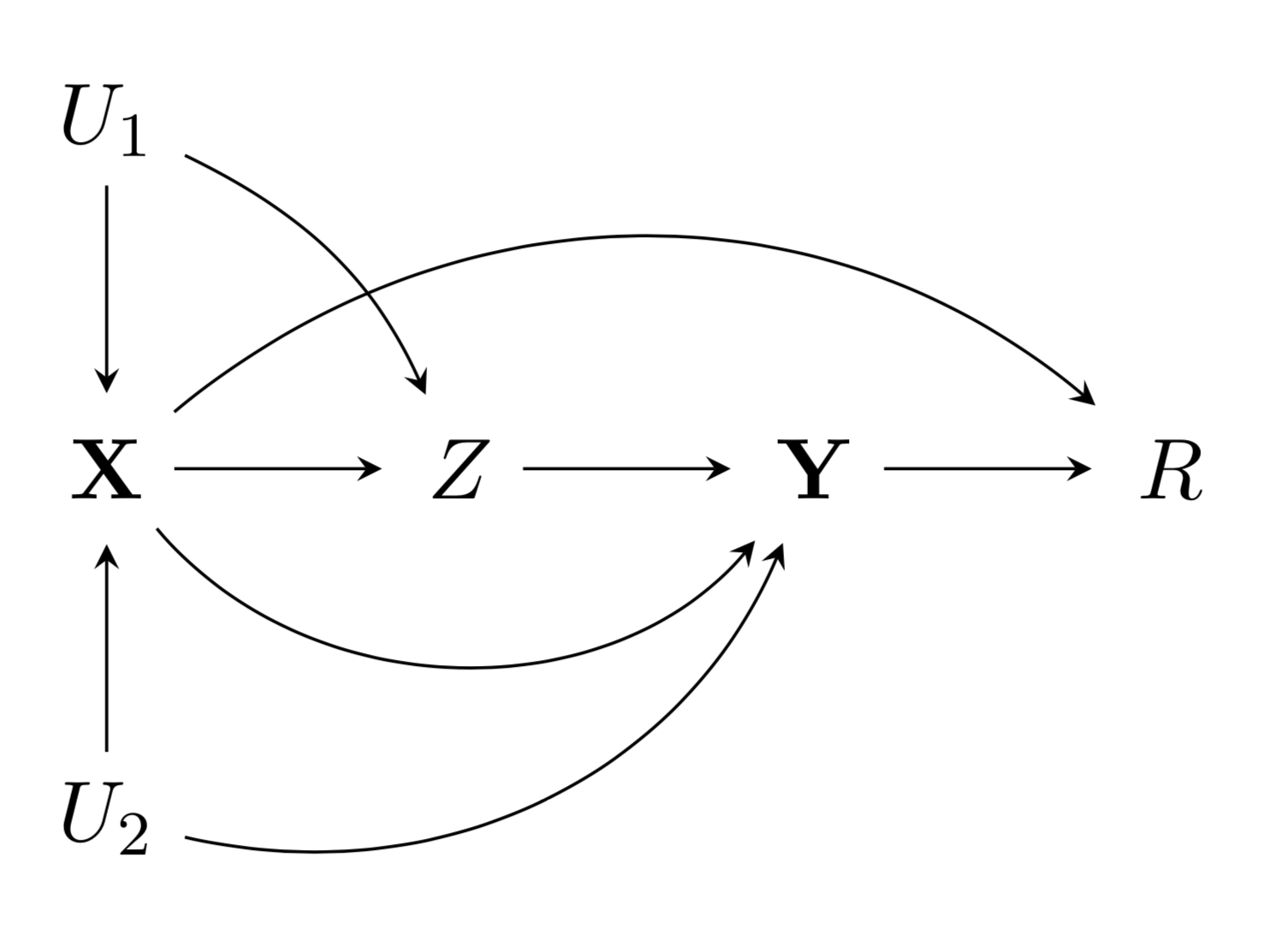}}
\caption{Two directed acyclic graphs for which the required exchangeability assumptions \eqref{A3} and \eqref{A4} hold, where $(U_1,U_2)$ are unmeasured and $Z$ is only observed when $R=1$. In graph (a) missingness can be viewed as occurring prior to the outcome, while in (b) it can be viewed as occurring after. The variable $U_2$ can be represented as the potential outcome $\mathbf{Y}^0$.}\label{dags}
\end{figure}

\subsection{Identification \& Efficiency Theory}

Our first result gives a new identifying expression for $\bpsi_z$ under the causal and missing at random assumptions above. This essentially follows from the important facts that, under \eqref{A4}--\eqref{A5}, the propensity score is given by
$$ \gamma_z(\bx) = \Pb(Z=z \mid \bX=\bx) $$
(note this means $Z$-positivity \eqref{A2} is equivalent to $\gamma_z$ being bounded away from zero and one), and that the outcome regression satisfies
$$ \bbeta_z(\bx)  =  \gamma_z(\bx) \E(\bY \mid \bX=\bx, Z=z)  .  $$

\begin{proposition} \label{ident}
Under the causal assumptions \eqref{A1}--\eqref{A3} and the missing at random assumptions \eqref{A4}--\eqref{A5}, it follows that
$$ \bpsi_z = \E(\bY^z) = \E\left\{ \frac{\bbeta_z(\bX)}{\gamma_z(\bX)} \right\} . $$
\end{proposition}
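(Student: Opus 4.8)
The plan is to reduce the statement to two algebraic identities already flagged in the text: that under \eqref{A4}--\eqref{A5} the propensity score equals $\gamma_z(\bx) = \Pb(Z=z \mid \bX=\bx)$, and that the outcome regression satisfies $\bbeta_z(\bx) = \gamma_z(\bx)\,\E(\bY \mid \bX=\bx, Z=z)$. Granting these, the conclusion is immediate: the standard g-formula \eqref{eq:psi2}, which holds under \eqref{A1}--\eqref{A3}, gives $\bpsi_z = \E\{\E(\bY \mid \bX, Z=z)\}$, and dividing the second identity by $\gamma_z(\bx)$ (legitimate because $Z$-positivity \eqref{A2} forces $\gamma_z$ away from zero) yields $\E(\bY \mid \bX=\bx, Z=z) = \bbeta_z(\bx)/\gamma_z(\bx)$. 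Substituting this into \eqref{eq:psi2} produces $\bpsi_z = \E\{\bbeta_z(\bX)/\gamma_z(\bX)\}$. So the substantive work is establishing the two identities, and both rest on the missing-at-random assumption \eqref{A4}.

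First I would prove the propensity-score identity. The key observation is that \eqref{A4}, $R \ind Z \mid \bX, \bY$, lets me strip the conditioning event $R=1$ out of $\lambda_z$, so that $\lambda_z(\bx,\by) = \Pb(Z=z \mid \bX=\bx, \bY=\by, R=1) = \Pb(Z=z \mid \bX=\bx, \bY=\by)$; assumption \eqref{A5} guarantees this conditional probability is well defined by ensuring $\Pb(R=1\mid\bX,\bY)>0$. Plugging this into $\gamma_z(\bx) = \int_\mathcal{Y} \lambda_z(\bx,\by)\,d\mu(\by\mid\bx)$ and recognizing the integral as an iterated expectation over $\bY$ given $\bX=\bx$ gives $\gamma_z(\bx) = \E\{\Pb(Z=z \mid \bX, \bY)\mid \bX=\bx\} = \Pb(Z=z\mid\bX=\bx)$ by the tower property.

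Next I would prove the regression identity by the same device. Starting from $\bbeta_z(\bx) = \int_\mathcal{Y} \by\,\lambda_z(\bx,\by)\,d\mu(\by\mid\bx)$ and again replacing $\lambda_z$ with $\Pb(Z=z\mid\bX=\bx,\bY=\by) = \E(\one\{Z=z\}\mid\bX=\bx,\bY=\by)$, the integral becomes $\E\{\bY\,\E(\one\{Z=z\}\mid\bX,\bY)\mid\bX=\bx\}$, which collapses by iterated expectation to $\E\{\bY\,\one\{Z=z\}\mid\bX=\bx\}$. Factoring this as $\E(\bY\mid\bX=\bx,Z=z)\,\Pb(Z=z\mid\bX=\bx)$ and substituting the propensity-score identity gives exactly $\gamma_z(\bx)\,\E(\bY\mid\bX=\bx,Z=z)$, as desired.

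I do not expect a genuine technical obstacle: once \eqref{A4} is used to remove the $R=1$ conditioning, everything else is the law of iterated expectations and elementary factoring. The one point requiring care is bookkeeping of the roles of the assumptions --- \eqref{A1} and \eqref{A3} deliver the g-formula \eqref{eq:psi2}; \eqref{A4} is the crux that converts the observable quantity $\lambda_z$ (defined only among $R=1$) into the full-data conditional law of $Z$; \eqref{A5} keeps those $R=1$-conditional quantities well defined; and \eqref{A2} keeps the denominator $\gamma_z$ bounded away from zero so the final ratio is meaningful. Making sure each assumption is invoked exactly where it is needed, rather than implicitly, is the main thing to get right.
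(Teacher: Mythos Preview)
Your proposal is correct and follows essentially the same route as the paper: both proofs hinge on using \eqref{A4} to replace $\lambda_z(\bx,\by)$ with $\Pb(Z=z\mid\bX=\bx,\bY=\by)$, then invoking iterated expectation and the g-formula \eqref{eq:psi2}. The only difference is cosmetic ordering---you first establish the two identities $\gamma_z=\Pb(Z=z\mid\bX)$ and $\bbeta_z=\gamma_z\,\E(\bY\mid\bX,Z=z)$ and then plug into \eqref{eq:psi2}, whereas the paper starts from \eqref{eq:psi2}, applies Bayes' rule to $\E(\bY\mid\bX,Z=z)$, and recognizes the resulting integrals as $\bbeta_z/\gamma_z$.
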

\begin{proof}
We have
\begin{align*}
\E(\bY^z) &= \int_{\mathcal{X}} \E(\bY \mid \bX=\bx, Z=z) \ d\Pb(\bx) \\
&= \int_\mathcal{X} \int_\mathcal{Y} \frac{\by \Pb(Z=z \mid \bX=\bx, \bY=\by)}{\Pb(Z=z \mid \bX=\bx)} \ d\mu(\by \mid \bx) \ d\Pb(\bx) \\
&= \int_\mathcal{X} \int_\mathcal{Y} \frac{\by \ \lambda_z(\bx, \by) }{\int_\mathcal{Y} \lambda_z(\bx, \by) d\mu(\by \mid \bx)} \ d\mu(\by \mid \bx) \ d\Pb(\bx) = \E\left\{ \frac{\bbeta_z(\bX)}{\gamma_z(\bX)} \right\} 
\end{align*}
where the first equality follows by the causal assumptions \eqref{A1}--\eqref{A3}, the second by Bayes' rule, the third by the missing at random assumptions \eqref{A4}--\eqref{A5} and iterated expectation, and the fourth by definition. 
\end{proof}

Interestingly, although the complete-data functional \eqref{eq:psi2} does not depend on the observational treatment process, its identified version under the missing at random assumptions does. Intuitively, this occurs because when $Z$ is missing, one cannot simply condition on $(\bX,Z)$ anymore, and instead the distribution of $Z$ given $\bX$ needs to be constructed by marginalizing over that of $Z$ given $(\bX, \bY)$ among those with $Z$ observed. 

The next result gives a crucial von Mises-type expansion for the parameter from Proposition \ref{ident}, which lays the foundation for the efficiency bound and estimation results to come. This result can be viewed as giving a distributional Taylor expansion for the functional $\bpsi_z$.

\begin{lemma} \label{vonmises}
The functional $\bpsi_z=\bpsi_z(\Pb)$ from Proposition \ref{ident} admits the expansion
$$ \bpsi_z(\overline\Pb) - \bpsi_z(\Pb) = \int_\mathcal{O} \Big\{ \bphi_z(\bO; \overline\Pb) - \bpsi_z(\overline\Pb) \Big\} (d\overline\Pb - d\Pb) + \bR_z(\overline\Pb, \Pb) $$
where
$$ \bphi_z(\bO; \Pb) = \left\{ \frac{\bY-\bbeta_z(\bX)/\gamma_z(\bX)}{\gamma_z(\bX)} \right\} \left[ \frac{R \{ \one(Z=z) - \lambda_z(\bX,\bY) \} }{\pi(\bX,\bY)} + \lambda_z(\bX,\bY) \right]  + \frac{\bbeta_z(\bX)}{\gamma_z(\bX)} $$
and
\begin{align*}
\bR_z(\overline\Pb, \Pb) = \E_\Pb &\bigg( \left\{ \frac{\bY-\overline\bbeta_z(\bX)/\overline\gamma_z(\bX)}{\overline\gamma_z(\bX)}  \right\} \left\{ \frac{\pi(\bX,\bY) - \overline\pi(\bX,\bY)}{\overline\pi(\bX,\bY)} \right\} \Big\{ \lambda_z(\bX,\bY) - \overline\lambda_z(\bX,\bY) \Big\}  \\
& \hspace{.3in} + \left[ \left\{ \frac{\bbeta_z(\bX)-\overline\bbeta_z(\bX)}{\gamma_z(\bX)}  \right\} + \frac{\overline\bbeta_z(\bX)}{ \overline\gamma_z(\bX) } \left\{ \frac{ \overline\gamma_z(\bX) - \gamma_z(\bX) }{\gamma_z(\bX)} \right\} \right] \left\{ \frac{\gamma_z(\bX) - \overline\gamma_z(\bX)}{\overline\gamma_z(\bX)} \right\} \bigg) .
\end{align*}
\end{lemma}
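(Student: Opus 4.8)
The plan is to reduce the stated two-point expansion to a single pointwise ``conditional bias'' identity and then reassemble it. Writing $\bpsi_z(\Pb) = \E_\Pb\{\bbeta_z(\bX)/\gamma_z(\bX)\}$ for the functional of Proposition \ref{ident}, observe that the signed-measure term on the right integrates $\bphi_z(\bO;\overline\Pb) - \bpsi_z(\overline\Pb)$ against $d\overline\Pb - d\Pb$; since $\bpsi_z(\overline\Pb)$ is a constant it integrates to zero against a difference of probability measures, so this term equals $\E_{\overline\Pb}\{\bphi_z(\bO;\overline\Pb)\} - \E_\Pb\{\bphi_z(\bO;\overline\Pb)\}$. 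Hence the whole lemma follows once I establish
\[ \E_\Pb\{\bphi_z(\bO;\overline\Pb)\} - \bpsi_z(\Pb) = \bR_z(\overline\Pb,\Pb) \]
for arbitrary $\Pb,\overline\Pb$. Setting $\overline\Pb = \Pb$ makes every nuisance difference in $\bR_z$ vanish, giving $\E_\Pb\{\bphi_z(\bO;\Pb)\}=\bpsi_z(\Pb)$; applied at $\overline\Pb$ this yields $\E_{\overline\Pb}\{\bphi_z(\bO;\overline\Pb)\}=\bpsi_z(\overline\Pb)$, and substituting both evaluations into the signed-measure term collapses the right-hand side to $\bpsi_z(\overline\Pb)-\bpsi_z(\Pb)$, as required. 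Notably this is purely a property of the map $\Pb\mapsto\bpsi_z(\Pb)$ and uses none of the causal or missing-at-random assumptions.

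To prove the displayed identity I would factor $\bphi_z(\bO;\overline\Pb) = A(\bX,\bY)\,B(\bO;\overline\Pb) + \overline\bbeta_z(\bX)/\overline\gamma_z(\bX)$, where $A = \{\bY-\overline\bbeta_z/\overline\gamma_z\}/\overline\gamma_z$ depends only on $(\bX,\bY)$ and $B = R\{\one(Z=z)-\overline\lambda_z\}/\overline\pi + \overline\lambda_z$ carries the $(R,Z)$ dependence. The only distributional facts needed are $\E_\Pb(R\mid\bX,\bY)=\pi(\bX,\bY)$ and $\E_\Pb\{R\,\one(Z=z)\mid\bX,\bY\}=\pi(\bX,\bY)\lambda_z(\bX,\bY)$, both immediate from the definitions of $\pi$ and $\lambda_z$. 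These give $\E_\Pb(B\mid\bX,\bY)=\overline\lambda_z + (\pi/\overline\pi)(\lambda_z-\overline\lambda_z)$, so taking iterated expectations and writing $\pi/\overline\pi = 1 + (\pi-\overline\pi)/\overline\pi$ splits $\E_\Pb(A\,B)$ into $\E_\Pb(A\lambda_z)$ plus exactly the first line of $\bR_z$; the leading $\overline\lambda_z$ and the first piece of the correction recombine into the true $\lambda_z$, which is the crucial cancellation of first-order error.

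It then remains to evaluate $\E_\Pb(A\lambda_z)+\E_\Pb(\overline\bbeta_z/\overline\gamma_z)$. Here I would integrate out $\bY$ conditionally on $\bX$ using $\int \by\,\lambda_z\,d\mu(\by\mid\bx)=\bbeta_z(\bx)$ and $\int \lambda_z\,d\mu(\by\mid\bx)=\gamma_z(\bx)$, reducing the expression to the covariate-only integrand $\bbeta_z/\overline\gamma_z - \overline\bbeta_z\gamma_z/\overline\gamma_z^{\,2} + \overline\bbeta_z/\overline\gamma_z$. The task is then the pointwise algebraic check that this integrand minus $\bbeta_z/\gamma_z$ equals the second line of $\bR_z$. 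The clean way to see it is to notice that the bracketed factor in that second line collapses, after clearing denominators, to $(\bbeta_z\overline\gamma_z-\overline\bbeta_z\gamma_z)/(\gamma_z\overline\gamma_z)$, after which both sides reduce to the same rational expression in $(\bbeta_z,\overline\bbeta_z,\gamma_z,\overline\gamma_z)$.

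I expect the main obstacle to be organizational rather than deep: tracking which nuisances are barred and confirming that every first-order term cancels, so that only the genuinely second-order products survive in $\bR_z$, namely the $\pi$-by-$\lambda_z$ cross term and the $\gamma_z$-by-$(\bbeta_z,\gamma_z)$ products. The $\gamma_z$ part is the fiddliest, since $\gamma_z$ enters nonlinearly through the ratio $\bbeta_z/\gamma_z$, and one must verify that the second-order remainder of this ratio is captured exactly by the product form in the second line of $\bR_z$.
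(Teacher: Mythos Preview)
Your proposal is correct and follows essentially the same route as the paper's proof: both reduce the expansion to the identity $\E_\Pb\{\bphi_z(\bO;\overline\Pb)\}-\bpsi_z(\Pb)=\bR_z(\overline\Pb,\Pb)$ and verify it by iterated expectation, first conditioning on $(\bX,\bY)$ to handle the $(R,Z)$ part and then on $\bX$ to reduce to the rational identity in $(\bbeta_z,\overline\bbeta_z,\gamma_z,\overline\gamma_z)$. Your $A\cdot B$ factorization is a slightly cleaner bookkeeping device than the paper's preliminary expansion of $\bphi_z$ into $1/\gamma_z$ and $\bbeta_z/\gamma_z^2$ pieces, but the computation and the algebraic check on the $\gamma_z$ remainder are identical.
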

\begin{proof}
Here we drop the $z$ argument throughout to ease notation. Note we can write 
\begin{align*}
\bphi(\bO; \Pb) &= \left[ \frac{R \{ \one(Z=z) - \lambda(\bX,\bY) \} }{\pi(\bX,\bY)} + \lambda(\bX,\bY) \right] \left\{ \frac{\bY-\bbeta(\bX)/\gamma(\bX)}{\gamma(\bX)} \right\} + \frac{\bbeta(\bX)}{\gamma(\bX)}  \\
&= \frac{1}{\gamma(\bX)} \left[ \frac{R\bY}{\pi(\bX,\bY)} \Big\{ \one(Z=z) - \lambda(\bX,\bY) \Big\} + \Big\{ \bY \lambda(\bX,\bY) - \bbeta(\bX) \Big\} \right] \\
&\hspace{.3in} - \frac{\bbeta(\bX)}{ \gamma(\bX)^2} \left[ \frac{R \{ \one(Z=z) - \lambda(\bX,\bY) \} }{\pi(\bX,\bY)} + \Big\{ \lambda(\bX,\bY)-\gamma(\bX) \Big\} \right] + \frac{\bbeta(\bX)}{\gamma(\bX)} .
\end{align*}
Therefore, letting $\E=\E_\Pb$ and dropping $(\bX,\bY)$ arguments from $(\pi,\lambda,\bbeta,\gamma)$ to further ease notation, we have 
\begin{align*}
 \bpsi(\overline\Pb) &- \bpsi(\Pb) + \int_\mathcal{O} \Big\{ \bphi(\bO; \overline\Pb) - \bpsi(\overline\Pb) \Big\} d\Pb \\
&= \E\bigg[ \frac{1}{\overline\gamma} \left\{ \frac{R \bY}{\overline\pi } \Big( \one_z - \overline\lambda \Big) + \Big( \bY \overline\lambda - \overline\bbeta \Big) \right\}  
- \frac{\overline\bbeta}{ \overline\gamma^2} \left\{ \frac{R ( \one_z - \overline\lambda ) }{\overline\pi} +  \Big( \overline\lambda -\overline\gamma \Big) \right\} 
+ \left( \frac{\overline\bbeta}{\overline\gamma} - \frac{\bbeta}{\gamma} \right) \bigg] \\
&= \E\bigg[ \left( \frac{\bY - \overline\bbeta/\overline\gamma}{\overline\gamma} \right) \left( \frac{\pi - \overline\pi}{\overline\pi } \right) \Big( \lambda - \overline\lambda \Big) 
+ \left( \frac{\bbeta - \overline\bbeta}{\overline\gamma} \right) - \frac{\overline\bbeta}{\overline\gamma} \left( \frac{\gamma - \overline\gamma}{\overline\gamma} \right) + \left( \frac{\overline\bbeta}{\overline\gamma} - \frac{\bbeta}{\gamma} \right) \bigg] \\
&= \E\bigg[ \left( \frac{\bY - \overline\bbeta/\overline\gamma}{\overline\gamma} \right) \left( \frac{\pi - \overline\pi}{\overline\pi } \right) \Big( \lambda - \overline\lambda \Big) 
+ \left( \frac{\bbeta}{\gamma} - \frac{\overline\bbeta}{\overline\gamma} \right) \left( \frac{\gamma - \overline\gamma}{\overline\gamma} \right) \bigg] = \bR_z(\overline\Pb,\Pb)
\end{align*}
where the second equality follows from from rearranging and iterated expectation (together with $\E(\bY \lambda \mid \bX)=\bbeta$ and $\E(\lambda \mid \bX)=\gamma$), and the third since 
$$
\frac{\bbeta - \overline\bbeta}{\overline\gamma}  - \frac{\overline\bbeta}{\overline\gamma} \left( \frac{\gamma - \overline\gamma}{\overline\gamma} \right) + \frac{\overline\bbeta}{\overline\gamma} - \frac{\bbeta}{\gamma}  
= \bbeta \left( \frac{1}{\overline\gamma} - \frac{1}{\gamma} \right) - \frac{\overline\bbeta}{\overline\gamma} \left( \frac{\gamma - \overline\gamma}{\overline\gamma} \right) 
 = \left( \frac{\bbeta}{\gamma} - \frac{\overline\bbeta}{\overline\gamma} \right) \left( \frac{\gamma - \overline\gamma}{\overline\gamma} \right) .
$$
Note the above implies $\E\{ \bphi(\bO;\Pb)-\bpsi_z(\Pb)\}=0$, which is also straightforward to see using iterated expectation. 
\end{proof}

Lemma \ref{vonmises} has several important consequences. First, it suggests how one could correct the first-order bias of a plug-in estimator $\bpsi_z(\widehat\Pb)$, by estimating the first term in the expansion and subtracting it off. This is one way to view what semiparametric estimators (particularly of the ``one-step'' variety) based on influence functions are doing, and in fact the estimator presented in the next subsection does precisely this. Second, it essentially immediately yields the efficient influence function for $\bpsi_z$. The next theorem states this result; after it we describe why the efficient influence function is useful here.

\begin{theorem} \label{eif}
Under a nonparametric model satisfying positivity conditions \eqref{A2} and \eqref{A5}, the efficient influence function for $\bpsi_z$ is given by $\bphi_z(\bO; \Pb)-\bpsi_z$ as defined in Lemma \ref{vonmises}.
\end{theorem}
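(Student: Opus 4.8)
The plan is to prove pathwise differentiability of $\bpsi_z$ and read its influence function directly off the von Mises expansion of Lemma \ref{vonmises}, and then to invoke the elementary fact that in a nonparametric model any influence function is automatically the efficient one. Write $\varphi(\bO) = \bphi_z(\bO;\Pb) - \bpsi_z$ for the candidate (understood componentwise, since $\bphi_z$ is $\R^p$-valued). Fix an arbitrary regular one-dimensional submodel $\{\Pb_t\}$ passing through $\Pb = \Pb_0$ with score $s$ at $t=0$. The target is to verify both that $\varphi \in L_2^0(\Pb)$ and that $\tfrac{d}{dt}\bpsi_z(\Pb_t)\big|_{t=0} = \E\{\varphi(\bO)\,s(\bO)\}$ for every such submodel.

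First I would specialize Lemma \ref{vonmises} to $\overline\Pb = \Pb_t$, giving $\bpsi_z(\Pb_t) - \bpsi_z(\Pb) = \int_\mathcal{O} \{\bphi_z(\bo;\Pb_t) - \bpsi_z(\Pb_t)\}(d\Pb_t - d\Pb) + \bR_z(\Pb_t,\Pb)$, then divide by $t$ and send $t\to 0$. The remainder contributes nothing: by its explicit form in Lemma \ref{vonmises}, $\bR_z(\Pb_t,\Pb)$ is a sum of integrals of products of the nuisance discrepancies $(\pi - \overline\pi)$, $(\lambda - \overline\lambda)$, $(\bbeta - \overline\bbeta)$, and $(\gamma - \overline\gamma)$ evaluated at $\overline\Pb = \Pb_t$. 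Along a smooth path each discrepancy is $O(t)$, so every term is a product of two $O(t)$ factors and hence $\bR_z(\Pb_t,\Pb) = O(t^2) = o(t)$.

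For the leading integral, introduce the bivariate functional $F(t,s) = \int_\mathcal{O}\{\bphi_z(\bo;\Pb_s) - \bpsi_z(\Pb_s)\}(d\Pb_t - d\Pb)$, so that the integral of interest is exactly $F(t,t)$. Since the increment $d\Pb_0 - d\Pb$ vanishes, $F(0,s) \equiv 0$ in $s$, whence $\partial_s F(0,0) = 0$; the chain rule then gives $\tfrac{d}{dt}F(t,t)\big|_{0} = \partial_t F(0,0) = \int_\mathcal{O}\{\bphi_z(\bo;\Pb) - \bpsi_z\}\,s(\bo)\,d\Pb = \E\{\varphi s\}$, where the $t$-derivative acts only on the measure $\Pb_t$ and returns the score. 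The decisive point is that the $\partial_s$ contribution — which would force one to differentiate the infinite-dimensional nuisance map $\Pb_s \mapsto \bphi_z(\cdot;\Pb_s)$ — drops out entirely because it multiplies the null increment $d\Pb_0 - d\Pb = 0$, so this delicate calculation never needs to be performed. Combining with the remainder bound yields $\tfrac{d}{dt}\bpsi_z(\Pb_t)\big|_0 = \E\{\varphi s\}$, identifying $\varphi$ as an influence function of $\bpsi_z$.

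It then remains to confirm $\varphi \in L_2^0(\Pb)$ and to conclude efficiency. The mean-zero property $\E_\Pb\{\bphi_z(\bO;\Pb) - \bpsi_z\} = 0$ is the identity already recorded at the end of the proof of Lemma \ref{vonmises}; square-integrability is exactly where positivity enters, as \eqref{A2} bounds $\gamma_z(\bX)$ and \eqref{A5} bounds $\pi(\bX,\bY)$ away from zero, controlling the denominators in $\bphi_z$ so that $\var_\Pb\{\varphi\} < \infty$. Finally, because the model is nonparametric the tangent space is all of $L_2^0(\Pb)$; a gradient lying in this space is its own projection onto the tangent space and is the unique gradient, so $\varphi = \bphi_z(\bO;\Pb) - \bpsi_z$ is the efficient influence function. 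I expect the only genuinely substantive step to be verifying that $\bR_z(\Pb_t,\Pb)$ is truly second order along the path, which rests entirely on the product structure displayed in Lemma \ref{vonmises} together with mild smoothness of the submodel; the remaining manipulations are the routine mechanics of pathwise differentiation.
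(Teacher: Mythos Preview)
Your argument is correct and follows essentially the same route as the paper: use the von Mises expansion of Lemma~\ref{vonmises}, observe that the remainder is a sum of products of nuisance discrepancies and hence second order along a smooth path, and conclude efficiency from the fact that the nonparametric tangent space is all of $L_2^0(\Pb)$.

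The one tactical difference worth noting is the direction in which Lemma~\ref{vonmises} is applied. You set $\overline\Pb=\Pb_t$, so the integrand $\bphi_z(\cdot;\Pb_t)$ varies with $t$, and you then need your bivariate-functional device to argue that the $\partial_s$ contribution vanishes. The paper instead swaps the roles, taking $\overline\Pb=\Pb$ and $\Pb\to\Pb_\epsilon$ in the lemma, which (after using $\E_\Pb\{\bphi_z(\bO;\Pb)-\bpsi_z\}=0$) yields
\[
\bpsi_z(\Pb_\epsilon)-\bpsi_z(\Pb)=\int\{\bphi_z(\bo;\Pb)-\bpsi_z(\Pb)\}\,d\Pb_\epsilon - \bR_z(\Pb,\Pb_\epsilon),
\]
so the integrand is fixed at $\Pb$ and only the measure moves; differentiation is then immediate with no chain-rule bookkeeping. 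Both arguments are valid, but the role-swap is a bit slicker and avoids introducing $F(t,s)$ altogether.
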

\begin{proof}
Recall from \citet{bickel1993efficient} and \citet{van2002semiparametric} that the nonparametric efficiency bound for a functional $\psi$ is given by the supremum of Cramer-Rao lower bounds for that functional across smooth parametric submodels. The efficient influence function is the mean-zero function whose variance equals the efficiency bound, and is given by the unique $\varphi$ that is a valid submodel score (or limit of such scores) satisfying pathwise differentiability, i.e.,
\begin{equation} \label{eq:pathwise}
 \frac{d}{d\epsilon} \psi(\Pb_\epsilon) \Bigm|_{\epsilon=0} = \int_\mathcal{O} \varphi(\bO;\Pb) \left( \frac{d}{d\epsilon} \log d\Pb_\epsilon \right) \Bigm|_{\epsilon=0} d\Pb  
 \end{equation}
for $\Pb_\epsilon$ any smooth parametric submodel. 

To see that $\bphi - \bpsi$ is the efficient influence function for $\bpsi$, first note that the expansion in Lemma \ref{vonmises} implies
$$  \bpsi_z(\Pb_\epsilon) - \bpsi_z(\Pb) =  \int_\mathcal{O} \Big\{ \bphi_z(\bO; \Pb) - \bpsi_z(\Pb) \Big\} d\Pb_\epsilon - \bR_z(\Pb, \Pb_\epsilon) $$
so differentiating with respect to $\epsilon$ yields
\begin{align*}
\frac{d}{d\epsilon} \bpsi_z(\Pb_\epsilon) &=  \int_\mathcal{O} \Big\{ \bphi_z(\bO; \Pb) - \bpsi_z(\Pb) \Big\} \frac{d}{d\epsilon}  d\Pb_\epsilon - \frac{d}{d\epsilon} \bR_z(\Pb, \Pb_\epsilon) \\
&= \int_\mathcal{O} \Big\{ \bphi_z(\bO; \Pb) - \bpsi_z(\Pb) \Big\} \left( \frac{d}{d\epsilon} \log  d\Pb_\epsilon \right) d\Pb_\epsilon - \frac{d}{d\epsilon} \bR_z(\Pb, \Pb_\epsilon) .
\end{align*}
The property \eqref{eq:pathwise} follows after evaluating at $\epsilon=0$, since 
$$ \frac{d}{d\epsilon} \bR_z(\Pb, \Pb_\epsilon) \Bigm|_{\epsilon=0} = 0 $$
by virtue of the fact that $\bR_z(\Pb, \Pb_\epsilon)$ consists of only second-order products of errors between $\Pb_\epsilon$ and $\Pb$. Thus applying the product rule yields a sum of two terms, each of which is a product of a derivative term (which may not be zero at $\epsilon=0$) and an error term involving differences of components of $\Pb_\epsilon$ and $\Pb$ (which will be zero at $\epsilon=0$). Since our model is nonparametric, the tangent space is the entire Hilbert space of mean-zero finite-variance functions; hence there is only one influence function satisfying \eqref{eq:pathwise} and it is the efficient one \citep{bickel1993efficient, van2002semiparametric, tsiatis2006semiparametric}. Therefore $\bphi - \bpsi$ must be the efficient influence function. 

An equivalent way to derive this result, as suggested by an anonymous reviewer in a previous version of this manuscript, is to use results from \citet{robins1994estimation}. Specifically, as in Theorem 7.2 of \citet{tsiatis2006semiparametric}, one can take full-data efficient influence function for $\bpsi_z$, inverse-probability-weight it for those with $R=1$ and subtract off its projection onto the tangent space. This yields the same efficient influence function.
\end{proof}

The efficient influence function is important since its variance $\cov\{\bphi_z(\bO; \Pb)-\bpsi_z\}$ gives an efficiency bound for estimation of $\bpsi_z$, providing a benchmark for efficient estimation. More precisely, following \citet{bickel1993efficient}, \citet{van2002semiparametric}, and \citet{tsiatis2006semiparametric}, this variance provides a local asymptotic minimax lower bound in the sense of Hajek and Le Cam, and tells us that the asymptotic variance of any regular asymptotically linear estimator can be no smaller (in that the difference in covariance matrices must be non-negative definite). Insofar as the bias-correction suggested earlier directly involves the efficient influence function, this object is also crucial for constructing estimators that have second-order bias and so can be $\sqrt{n}$-consistent and asymptotically normal even when the nuisance functions are estimated flexibly at slower rates of convergence. This feature will be detailed in the next subsection.

\subsection{Estimation \& Inference}

Here we present an estimator based on the functional expansion from Lemma \ref{vonmises}, which is asymptotically efficient under weak nonparametric conditions.

To ease notation let $\bphi_z = \bphi_z(\bO; \Pb)$ and $\widehat\bphi_z = \bphi_z(\bO; \widehat\Pb)$ denote the true and estimated versions of the uncentered efficient influence function for $\bpsi_z$. The estimator we study here is given by
$$ \widehat\bpsi_z = \Pn( \widehat\bphi_z ) $$
where we use $\Pn(f)=\frac{1}{n} \sum_{i=1}^n f(\bO_i)$ to denote sample averages.  Therefore the estimator $\widehat\bpsi_z$ is simply the sample average of the estimated (uncentered) influence function values; \black{equivalently we can write it as a bias-corrected version of the plug-in $\bpsi_z(\widehat\Pb)$, namely
$$ \widehat\bpsi_z = \bpsi_z(\widehat\Pb) + \Pn( \widehat{\boldsymbol{\varphi}}_z )  $$
where $\widehat{\boldsymbol{\varphi}}_z= \widehat\bphi_z - \bpsi_z(\widehat\Pb)$ is the estimated efficient influence function. }

For simplicity, in the following results we assume the nuisance estimates $\widehat\Pb$ are constructed from a separate independent sample. In practice, one can split the sample, use part for fitting $\widehat\Pb$ and the other for constructing $\widehat\bphi_z$, and then swap so as to attain full efficiency based on the entire sample size $n$ rather than a fraction, e.g., $n/2$. This is the idea behind the sample-splitting methods used in other functional estimation problems \citep{robins2008higher, zheng2010asymptotic, chernozhukov2016double}. Alternatively, if the same observations are used both for estimating $\widehat\Pb$ and constructing $\widehat\bphi_z$, one generally needs to rely on empirical process conditions to obtain the kinds of results we present here.

The next theorem gives the asymptotic properties of the estimator $\widehat\bpsi_z$, and conditions under which it is $\sqrt{n}$-consistent and converging to a normal distribution with asymptotic variance equal to the nonparametric efficiency bound.  In what follows, we let $\| f \|^2 = \Pb(f^2)= \int_\mathcal{O} f(\bo)^2 \ d\Pb(\bo)$ denote the squared $L_2(\Pb)$ norm.

\begin{theorem} \label{trtest}
Assume $\| \widehat\bphi_z - \bphi_z \| = o_\Pb(1)$ and $\Pb( \epsilon < \widehat\pi < 1-\epsilon)=\Pb(\epsilon < \widehat\gamma_z < 1-\epsilon)=1$. Then
$$ \widehat\bpsi_z - \bpsi_z = O_\Pb\left( \frac{1}{\sqrt{n}} + \| \widehat\pi - \pi \| \| \widehat\lambda_z - \lambda_z \| + \left( \| \widehat\bbeta_z - \bbeta_z \| + \| \widehat\gamma_z - \gamma_z \| \right) \| \widehat\gamma_z - \gamma_z \| \right) , $$
and if $\| \widehat\pi - \pi \| \| \widehat\lambda_z - \lambda_z \| + \left( \| \widehat\bbeta_z - \bbeta_z \| + \| \widehat\gamma_z - \gamma_z \| \right) \| \widehat\gamma_z - \gamma_z \|=o_\Pb(1/\sqrt{n})$, we have
$$ \sqrt{n}(\widehat\bpsi_z - \bpsi_z) \indist N\Big( \mathbf{0}, \cov(\bphi_z) \Big).  $$
\end{theorem}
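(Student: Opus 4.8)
The plan is to use the von Mises expansion of Lemma \ref{vonmises} to show that the error of $\widehat\bpsi_z = \Pn(\widehat\bphi_z)$ decomposes into a leading empirical average of the \emph{true} efficient influence function plus negligible remainders. Writing $\bpsi_z = \bpsi_z(\Pb)$ and adding and subtracting, I would begin from the exact identity
$$ \widehat\bpsi_z - \bpsi_z = (\Pn - \Pb)(\widehat\bphi_z - \bphi_z) + (\Pn-\Pb)(\bphi_z) + \big\{ \Pb(\widehat\bphi_z) - \bpsi_z \big\} , $$
which holds since $\widehat\bpsi_z = \Pn(\widehat\bphi_z)$. Call these three terms $T_1$, $T_2$, and $T_3$. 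The term $T_2$ will supply the limiting distribution, $T_1$ is an empirical-process term neutralized by sample splitting, and $T_3$ is the first-order bias that the influence-function construction renders second order.

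For $T_2 = (\Pn-\Pb)(\bphi_z)$: since $\bphi_z - \bpsi_z$ is the efficient influence function by Theorem \ref{eif}, and its covariance $\cov(\bphi_z)$ is finite under the positivity conditions \eqref{A2} and \eqref{A5} (plus finite outcome second moments), the multivariate central limit theorem gives $\sqrt{n}\, T_2 \indist N(\mathbf{0}, \cov(\bphi_z))$, so in particular $T_2 = O_\Pb(1/\sqrt n)$. For $T_1 = (\Pn - \Pb)(\widehat\bphi_z - \bphi_z)$: because $\widehat\Pb$ is fit on an independent sample, I would condition on that training data, under which $\widehat\bphi_z - \bphi_z$ is a fixed function, so that $T_1$ has conditional mean zero and conditional variance at most $n^{-1}\| \widehat\bphi_z - \bphi_z \|^2$. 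Chebyshev then gives $T_1 = O_\Pb(\| \widehat\bphi_z - \bphi_z \| / \sqrt n) = o_\Pb(1/\sqrt n)$ using the assumed $\| \widehat\bphi_z - \bphi_z \| = o_\Pb(1)$ (a standard argument transfers the conditional statement to the unconditional one). This is the step that replaces a Donsker/entropy condition.

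For the bias term $T_3 = \Pb(\widehat\bphi_z) - \bpsi_z$, I would apply Lemma \ref{vonmises} with $\overline\Pb = \widehat\Pb$ and use that $\int \{ \bphi_z(\bo; \widehat\Pb) - \bpsi_z(\widehat\Pb) \} \, d\widehat\Pb = 0$, since the influence function is mean-zero under its own law. The integral against $d\widehat\Pb - d\Pb$ then collapses to $-\{ \Pb(\widehat\bphi_z) - \bpsi_z(\widehat\Pb) \}$, and rearranging yields exactly $T_3 = \bR_z(\widehat\Pb, \Pb)$. I would bound $\bR_z$ term by term with Cauchy--Schwarz: the positivity conditions bound $1/\widehat\pi$ and $1/\widehat\gamma_z$, hence keep the leading multipliers in $L_2(\Pb)$, so that the first summand is $O_\Pb(\| \widehat\pi - \pi \| \, \| \widehat\lambda_z - \lambda_z \|)$ and the second is $O_\Pb((\| \widehat\bbeta_z - \bbeta_z \| + \| \widehat\gamma_z - \gamma_z \|)\| \widehat\gamma_z - \gamma_z \|)$. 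Summing $T_1+T_2+T_3$ then gives the stated rate. Finally, under the product-rate condition the remainder is $T_3 = o_\Pb(1/\sqrt n)$, so $\sqrt n(\widehat\bpsi_z - \bpsi_z) = \sqrt n\, T_2 + o_\Pb(1) \indist N(\mathbf{0}, \cov(\bphi_z))$ by Slutsky, giving the second claim.

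The main obstacle is controlling the empirical-process term $T_1$ without any smoothness or entropy (Donsker) assumption, which is exactly what the sample-splitting/independent-sample device buys: it lets me treat $\widehat\bphi_z$ as fixed and reduce $T_1$ to an elementary conditional variance bound. The remaining delicate point is the Cauchy--Schwarz bounding of $\bR_z$, which relies essentially on the positivity bounds to keep the multiplier factors square-integrable and on finite outcome second moments so that $\cov(\bphi_z)$ exists; given Lemma \ref{vonmises}, the algebraic identification $T_3 = \bR_z(\widehat\Pb,\Pb)$ and the final assembly are then routine.
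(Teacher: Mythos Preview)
Your proposal is correct and follows essentially the same route as the paper: the identical three-term decomposition $\widehat\bpsi_z - \bpsi_z = (\Pn-\Pb)(\widehat\bphi_z-\bphi_z) + (\Pn-\Pb)\bphi_z + \{\Pb(\widehat\bphi_z)-\bpsi_z\}$, with the empirical-process term handled by the sample-splitting/Chebyshev argument (the paper packages this as its Lemma~\ref{splitlem}), the bias term identified with $\bR_z(\widehat\Pb,\Pb)$ via Lemma~\ref{vonmises} and then bounded by Cauchy--Schwarz using the positivity assumptions, and the CLT applied to the middle term. Your derivation of $T_3=\bR_z(\widehat\Pb,\Pb)$ via the mean-zero property $\int\{\bphi_z(\bo;\widehat\Pb)-\bpsi_z(\widehat\Pb)\}\,d\widehat\Pb=0$ is equivalent to (indeed a rephrasing of) the identity the paper reads off directly from the proof of Lemma~\ref{vonmises}.
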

\begin{proof}
Dropping $z$ subscripts to ease notation, we can write
\begin{equation} \label{eq:decomp}
\widehat\bpsi - \bpsi = (\Pn - \Pb) (\widehat\bphi - \bphi) + (\Pn-\Pb) \bphi + \Pb (\widehat\bphi - \bphi) .
\end{equation}
For the first term in \eqref{eq:decomp} above, Lemma \ref{splitlem} in the Appendix (reproduced from \citet{kennedy2019sharp}) implies that
$$ (\Pn - \Pb) (\widehat\bphi - \bphi) = O_\Pb\left( \frac{\| \widehat\bphi - \bphi \| }{\sqrt{n}} \right) = o_\Pb(1/\sqrt{n}) $$
where the last equality follows since $\| \widehat\bphi - \bphi \| = o_\Pb(1)$ by assumption. The expansion from Lemma \ref{vonmises} now implies
\begin{align*}
\Pb (\widehat\bphi - \bphi) &= \int_\mathcal{O} \Big\{ \bphi(\bO;\widehat\Pb) - \bpsi(\Pb) \Big\} d \Pb = \bR(\widehat\Pb,\Pb) \\
&\lesssim \| \widehat\pi - \pi \| \| \widehat\lambda - \lambda \| + \left( \| \widehat\bbeta - \bbeta \| + \| \widehat\gamma - \gamma \| \right) \| \widehat\gamma - \gamma \|
\end{align*}
where the last line uses Cauchy-Schwarz and the fact that $(\widehat\gamma,\widehat\pi,\gamma)$ are all bounded away from zero. This yields the result.
\end{proof}

Importantly, Theorem \ref{trtest} shows that $\widehat\bpsi_z$ attains faster rates than its nuisance estimators, and can be asymptotically efficient under weak nonparametric conditions. Specifically, as long as the influence function is consistently estimated in $L_2$ norm, the estimator $\widehat\bpsi_z$ has a rate of convergence that is second-order in the nuisance estimation error, thus attaining faster rates than the nuisance estimators. Under standard $n^{-1/4}$-type rate conditions, the estimator is $\sqrt{n}$-consistent, asymptotically normal, and efficient. Importantly, these rates can plausibly be attained under nonparametric smoothness, sparsity, or other structural conditions (e.g., additive modeling or bounded variation assumptions, etc.). \black{For example, if it is assumed that all $d$-dimensional nuisance functions lie in a H\"{o}lder class with smoothness index $s$ (i.e., partial derivatives up to order $s$ exist and are Lipschitz) then the assumption of Theorem \ref{trtest} would be satisfied when $s>d/2$, i.e., the smoothness index is at least half the dimension. Alternatively, if the functions are $s$-sparse then one would need $s = o( \sqrt{n})$ up to log factors, as in \citet{farrell2015robust}.} Then asymptotically valid 95\% confidence intervals can be constructed via a simple Wald form, $\widehat\bpsi_z \pm 1.96 \sqrt{\text{diag}\{\cov(\bphi_z)\}/n}$ The next result points out the double robustness of $\widehat\bpsi_z$.

\begin{corollary} \label{consistency}
Under the conditions of Theorem \ref{trtest}, the estimator $\widehat\bpsi_z$ is consistent if either
\begin{enumerate}
\item $\| \widehat\gamma_z - \gamma_z \|=o_\Pb(1)$ and $\| \widehat\pi - \pi \| = o_\Pb(1)$, or
\item $\| \widehat\gamma_z - \gamma_z \|=o_\Pb(1)$ and $\| \widehat\lambda_z - \lambda_z \|=o_\Pb(1)$.
\end{enumerate}
\end{corollary}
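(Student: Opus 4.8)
The plan is to reduce the claim to a statement about the second-order remainder $\bR_z$ and then exploit its product structure. Starting from the decomposition \eqref{eq:decomp} used in the proof of Theorem \ref{trtest},
$$ \widehat\bpsi_z - \bpsi_z = (\Pn - \Pb)(\widehat\bphi_z - \bphi_z) + (\Pn - \Pb)\bphi_z + \Pb(\widehat\bphi_z - \bphi_z) , $$
I would first dispatch the two sampling terms. Under the maintained assumption $\|\widehat\bphi_z - \bphi_z\| = o_\Pb(1)$, Lemma \ref{splitlem} gives $(\Pn - \Pb)(\widehat\bphi_z - \bphi_z) = O_\Pb(\|\widehat\bphi_z - \bphi_z\|/\sqrt{n}) = o_\Pb(1)$, while $(\Pn - \Pb)\bphi_z = o_\Pb(1)$ by the law of large numbers (the efficient influence function having finite variance under the positivity conditions and a standard moment condition on $\bY$). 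Hence consistency is equivalent to showing that the remaining term $\Pb(\widehat\bphi_z - \bphi_z) = \bR_z(\widehat\Pb, \Pb)$ is $o_\Pb(1)$.

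For this I would invoke the product-form bound on the remainder already established in the proof of Theorem \ref{trtest}, namely
$$ \bR_z(\widehat\Pb, \Pb) \lesssim \|\widehat\pi - \pi\| \, \|\widehat\lambda_z - \lambda_z\| + \big( \|\widehat\bbeta_z - \bbeta_z\| + \|\widehat\gamma_z - \gamma_z\| \big) \|\widehat\gamma_z - \gamma_z\| . $$
The key observation is the mixed-bias structure: each of the two summands is a product of estimation-error factors, so each vanishes as soon as \emph{one} factor in it tends to zero while its companion stays bounded in probability. The companion factors are bounded essentially for free: because $\lambda_z$, $\pi$, and $\gamma_z$ are conditional probabilities and the positivity conditions keep $\widehat\gamma_z$ and $\widehat\pi$ bounded away from $0$ and $1$, the error norms $\|\widehat\lambda_z - \lambda_z\|$, $\|\widehat\pi - \pi\|$, and $\|\widehat\gamma_z - \gamma_z\|$ are each $O_\Pb(1)$, while $\|\widehat\bbeta_z - \bbeta_z\| = O_\Pb(1)$ holds under a mild moment condition on $\bY$.

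With this in hand the result follows by cases. The second summand carries $\|\widehat\gamma_z - \gamma_z\|$ as a factor in both of its pieces, so the condition $\|\widehat\gamma_z - \gamma_z\| = o_\Pb(1)$ — which is common to both parts of the corollary — forces the entire second summand to be $o_\Pb(1)$. It then only remains to kill the first summand $\|\widehat\pi - \pi\| \, \|\widehat\lambda_z - \lambda_z\|$: under condition (1) this is $o_\Pb(1) \cdot O_\Pb(1)$ because $\|\widehat\pi - \pi\| = o_\Pb(1)$, while under condition (2) it is $O_\Pb(1) \cdot o_\Pb(1)$ because $\|\widehat\lambda_z - \lambda_z\| = o_\Pb(1)$. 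Either way $\bR_z(\widehat\Pb, \Pb) = o_\Pb(1)$, hence $\widehat\bpsi_z - \bpsi_z = o_\Pb(1)$. The only non-automatic step — and thus the point I would handle carefully — is verifying that the companion factor in each product is genuinely $O_\Pb(1)$; this is immediate for the probability-valued nuisances by positivity, and is the sole place where a boundedness/moment assumption on the outcome enters, through $\|\widehat\bbeta_z - \bbeta_z\|$.
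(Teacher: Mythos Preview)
Your argument is correct and matches the paper's intent: the paper gives no explicit proof of this corollary, treating it as an immediate consequence of the rate bound in Theorem~\ref{trtest}, and your write-up simply unpacks that implication. If anything you are more careful than the paper, since you flag that $\|\widehat\bbeta_z-\bbeta_z\|=O_\Pb(1)$ is needed for the second summand to vanish under $\|\widehat\gamma_z-\gamma_z\|=o_\Pb(1)$ alone; the paper implicitly assumes this boundedness without comment.
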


Corollary \ref{consistency} shows that $\widehat\bpsi_z$ is doubly robust \citep{scharfstein1999adjusting, robins2000robust}, since it is consistent if either $\widehat\pi$ or $\widehat\lambda_z$ are (and $\widehat\gamma_z$ is). Note however that our formulation requires the propensity score $\gamma_z$ to be estimated consistently. This contrasts with the semiparametric approach of \citet{zhang2016causal}, who construct an estimator that is consistent as long as two of three nuisance functions are estimated consistently. However, \citet{zhang2016causal} work under a different factorization of the likelihood, and impose parametric models on the partially observed propensity score and outcome regression functions. \black{It is unclear whether our remainder can be written in a triply robust form, though we conjecture that results of \citet{zhang2016causal} would not hold in the fully nonparametric setting considered here. This and a more general study of triple robustness could be an important avenue for future work.}

\section{Application to Missing Instruments}

Here we apply the theory from the previous section to identify and efficient estimate the local average treatment effect in instrumental variable studies with missing instrument values.

It is quite common for some instrument values to be missing in instrumental variable studies \citep{mogstad2012instrumental, chaudhuri2016gmm, kennedy2017paradoxes}. This setup fits in the proposed framework from the previous section as follows. We have $\bO = (\bX, R, RZ, \bY)$ where $Z \in \{0,1\}$ is now an instrument, and $\bY=(A,Y)$ for $A \in \{0,1\}$ a binary treatment and $Y \in \R$ an outcome of interest. \black{Here the outcome $Y \in \R$ is a scalar, but a multivariate outcome presents no additional complications.} Note also our slight abuse of notation in using bold $\bY=(A,Y)$ to denote a vector that contains the scalar outcome $Y$. Then we can write
$$ \bbeta_z(\bx) = \{ \beta_z^a(\bx) , \beta_z^y(\bx) \}^\T $$
where $\beta_z^t(\bx) = \E\{ T \lambda_z(\bX,A,Y) \mid \bX=\bx\}$. 

In addition to the causal assumptions \eqref{A1}--\eqref{A3} and missing at random assumptions \eqref{A4}--\eqref{A5} from before, we further make the instrumental variable assumptions:
\begin{align}
\text{(Exclusion.) } \ & Y^{za}=Y^a \tag{A6} \\
\text{(Relevance.) } \ & \Pb(A^{z=1} > A^{z=0}) > \epsilon \tag{A7} \\
\text{(Monotonicity.) } \ & \Pb(A^{z=1} \geq A^{z=0})=1 \tag{A8}
\end{align}

Our first result identifies the local average treatment effect under the assumptions above. 

\begin{proposition} \label{ivident}
Under the causal assumptions \eqref{A1}--\eqref{A3}, the missing at random assumptions \eqref{A4}--\eqref{A5}, and the instrumental variable assumptions (A6)--(A8), it follows that
$$ \theta = \E(Y^{a=1}-Y^{a=0} \mid A^{z=1}>A^{z=0}) =  \frac{ \E\{ \beta_1^y(\bX) / \gamma_1(\bX) - \beta_0^y(\bX) / \gamma_0(\bX)  \} }{ \E\{ \beta_1^a(\bX) / \gamma_1(\bX) - \beta_0^a(\bX) / \gamma_0(\bX)  \} } . $$
\end{proposition}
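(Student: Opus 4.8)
The plan is to combine the general identification result of Proposition~\ref{ident}, applied coordinatewise to the vector outcome $\bY=(A,Y)$, with the classical principal-strata argument for the local average treatment effect. First I would recognize the numerator and denominator of the claimed expression as instrument-effect contrasts. Since $\bY=(A,Y)$ and the identifying map of Proposition~\ref{ident} is linear in the outcome, it applies separately to each coordinate, giving $\E(A^z)=\E\{\beta_z^a(\bX)/\gamma_z(\bX)\}$ and $\E(Y^z)=\E\{\beta_z^y(\bX)/\gamma_z(\bX)\}$; equivalently, using the fact noted just before Proposition~\ref{ident} that $\beta_z^t(\bx)/\gamma_z(\bx)=\E(T\mid \bX=\bx, Z=z)$. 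Hence the right-hand side of the proposition equals the Wald-type ratio
$$ \frac{\E(Y^{z=1})-\E(Y^{z=0})}{\E(A^{z=1})-\E(A^{z=0})} . $$
This step uses only the causal assumptions \eqref{A1}--\eqref{A3} and the missing-at-random assumptions \eqref{A4}--\eqref{A5}, with the instrument $Z$ playing the role of the ``exposure'' from the previous section; the instrumental-variable assumptions are not yet needed.

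Second I would show this Wald ratio equals $\theta$ via the standard decomposition into compliance types. Under monotonicity (A8) there are no defiers, so each unit is an always-taker, a never-taker, or a complier. For the denominator, $A^{z=1}-A^{z=0}$ is zero for always- and never-takers and equals one for compliers, so $\E(A^{z=1})-\E(A^{z=0})=\Pb(A^{z=1}>A^{z=0})$, which is strictly positive by relevance (A7), making the ratio well-defined. For the numerator, the exclusion restriction (A6) lets me replace the instrument-indexed potential outcome $Y^z$ by the treatment-level potential outcome $Y^{A^z}$; then $Y^{z=1}-Y^{z=0}$ is zero for always- and never-takers and equals $Y^{a=1}-Y^{a=0}$ for compliers. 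Therefore $\E(Y^{z=1})-\E(Y^{z=0})=\E\{(Y^{a=1}-Y^{a=0})\,\one(A^{z=1}>A^{z=0})\}$, and dividing by $\Pb(A^{z=1}>A^{z=0})$ yields exactly $\E(Y^{a=1}-Y^{a=0}\mid A^{z=1}>A^{z=0})=\theta$.

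The conceptual content is light since both ingredients are classical, but the step I would be most careful about is the exclusion-restriction reduction $Y^z=Y^{A^z}$ used in the numerator: this requires combining exclusion (A6) with consistency \eqref{A1} to pass from the nested potential outcome $Y^{z,A^z}$ to the treatment-indexed outcome $Y^{A^z}$, after which monotonicity isolates the complier contribution. The only other place needing care is verifying that exchangeability \eqref{A3} for the instrument is precisely what licenses the coordinatewise application of Proposition~\ref{ident} to both $A^z$ and $Y^z$; everything else is routine substitution and cancellation.
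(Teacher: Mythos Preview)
Your proposal is correct and follows essentially the same route as the paper: the paper's proof simply cites the well-known fact that (A6)--(A8) give $\theta = \E(Y^{z=1}-Y^{z=0})/\E(A^{z=1}-A^{z=0})$ and then invokes Proposition~\ref{ident} with $\bY=(A,Y)$ and $\mathcal{Z}=\{0,1\}$, which is exactly what you do. You merely spell out the principal-strata argument that the paper defers to the literature.
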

\begin{proof}
It is well known \citep{imbens1994identification, abadie2003semiparametric} that assumptions (A6)--(A8) imply
$$ \theta = \frac{ \E(Y^{z=1}-Y^{z=0})} { \E(A^{z=1}-A^{z=0})} $$
so the result follows from Proposition \ref{ident}, after taking $\bY=(A,Y)$ and $\mathcal{Z} = \{0,1\}$. 
\end{proof}

Although we focus on the local average treatment effect, the same observed data functional can represent other treatment effects under varying assumptions (e.g., the effect on the would-be-treated under a no-effect-modification assumption as discussed for example by \citet{hernan2006instruments}). Thus our results equally apply to these other settings.

Now we go on to use the theory from the previous section to construct an efficient estimator of the local average treatment effect $\theta$. As with $\bbeta_z(\bx)$, we can decompose the efficient influence function $\bphi_z$ from the previous section as
$$ \bphi_z(\bO) = \{ \phi_z^a(\bO) , \phi_z^y(\bO) \}^\T $$
for the two outcomes $(A,Y) \in \bY$. As before we write $\bphi_z = \bphi_z(\bO; \Pb)$ and $\widehat\bphi_z = \bphi_z(\bO; \widehat\Pb)$ to ease notation, and suppose $\widehat\Pb$ is constructed from an independent sample. The proposed estimator is given by
$$ \widehat\theta = \frac{ \Pn( \widehat\phi_1^y - \widehat\phi_0^y ) }{ \Pn( \widehat\phi_1^a - \widehat\phi_0^a )  } . $$
This simply takes the ratio of the corresponding estimators for the effects of $Z$ on $A$ and $Y$, respectively. 

The next result describes the asymptotic properties of the estimator $\widehat\theta$, and gives conditions under which it is $\sqrt{n}$-consistent and asymptotically normal, akin to the earlier Theorem \ref{trtest} for a general $\widehat\bpsi_z$.

\begin{theorem} \label{ivest}
Assume $\| \widehat\phi_z^t - \phi_z^t \| = o_\Pb(1)$ for $z \in \{0,1\}$ and $t \in \{a,y\}$, and $\Pb( \epsilon < \widehat\pi < 1-\epsilon)=\Pb(\epsilon < \widehat\gamma_z < 1-\epsilon)=\Pb\{\Pn( \widehat\phi_1^a - \widehat\phi_0^a ) > \epsilon\}=1$.  
Define
$$ S_{n,z} = \| \widehat\pi - \pi \| \| \widehat\lambda_z - \lambda_z \| + \left( \max_{t \in \{a,y\}} \| \widehat\bbeta_z^t - \bbeta_z^t \| + \| \widehat\gamma_z - \gamma_z \| \right) \| \widehat\gamma_z - \gamma_z \| . $$
Then
$$ \widehat\theta - \theta = O_\Pb\left(\frac{1}{\sqrt{n}} + S_{n,0} + S_{n,1}  \right)  , $$
and if $S_{n,0} + S_{n,1} =o_\Pb(1/\sqrt{n})$, we have
$$ \sqrt{n}(\widehat\theta - \theta) \indist N\left( 0, \var\left\{ \frac{ ( \phi_1^y - \phi_0^y ) - \theta ( \phi_1^a - \phi_0^a )}{\Pb( \phi_1^a - \phi_0^a )}  \right\} \right).  $$
\end{theorem}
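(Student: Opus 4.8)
The plan is to treat $\widehat\theta$ as a ratio and linearize it in terms of its numerator and denominator, each of which is of the $\widehat\bpsi_z$ form already analyzed in Theorem \ref{trtest}. Write $\widehat N = \Pn(\widehat\phi_1^y - \widehat\phi_0^y)$ and $\widehat D = \Pn(\widehat\phi_1^a - \widehat\phi_0^a)$ for the numerator and denominator, with population analogs $N = \Pb(\phi_1^y - \phi_0^y)$ and $D = \Pb(\phi_1^a - \phi_0^a)$, so that $\widehat\theta = \widehat N / \widehat D$ and $\theta = N/D$ by Proposition \ref{ivident}; the relevance assumption (A7) guarantees $D \geq \epsilon > 0$. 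First I would apply the three-term decomposition \eqref{eq:decomp} from the proof of Theorem \ref{trtest} to the vector-valued estimator $\widehat\bpsi_1 - \widehat\bpsi_0$. Since the von Mises expansion of Lemma \ref{vonmises} is additive across $z$, reading off the $a$- and $y$-components gives
\begin{align*}
\widehat N - N &= (\Pn - \Pb)(\phi_1^y - \phi_0^y) + O_\Pb(S_{n,0} + S_{n,1}) + o_\Pb(1/\sqrt{n}) , \\
\widehat D - D &= (\Pn - \Pb)(\phi_1^a - \phi_0^a) + O_\Pb(S_{n,0} + S_{n,1}) + o_\Pb(1/\sqrt{n}) ,
\end{align*}
where the remainder of the difference is bounded by $S_{n,0} + S_{n,1}$ via the triangle inequality and Cauchy--Schwarz exactly as in Theorem \ref{trtest}, and the empirical-process term is controlled by Lemma \ref{splitlem}.

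Next I would use the exact algebraic identity
$$ \widehat\theta - \theta = \frac{1}{\widehat D}\Big[ (\widehat N - N) - \theta (\widehat D - D) \Big] , $$
which avoids any Taylor expansion of the ratio and confines all second-order behavior to the already-analyzed remainders. The positivity assumption $\Pb\{\Pn(\widehat\phi_1^a - \widehat\phi_0^a) > \epsilon\} = 1$ ensures $\widehat D \geq \epsilon$ almost surely, so $1/\widehat D = O_\Pb(1)$. Substituting the two expansions above, the bracketed quantity equals
$$ (\Pn - \Pb)\Big[ (\phi_1^y - \phi_0^y) - \theta(\phi_1^a - \phi_0^a) \Big] + O_\Pb(S_{n,0} + S_{n,1}) + o_\Pb(1/\sqrt{n}) , $$
and since the leading empirical-process term is a mean-zero average of i.i.d.\ terms with finite variance it is $O_\Pb(1/\sqrt{n})$. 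This delivers the stated rate $\widehat\theta - \theta = O_\Pb(1/\sqrt{n} + S_{n,0} + S_{n,1})$.

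Finally, for the distributional limit I would impose $S_{n,0} + S_{n,1} = o_\Pb(1/\sqrt{n})$, so that after multiplying by $\sqrt{n}$ the remainder is asymptotically negligible. Consistency $\widehat D \inprob D$ follows from the rate result just established, and combined with $D \geq \epsilon$ this lets Slutsky's theorem replace $1/\widehat D$ by $1/D = 1/\Pb(\phi_1^a - \phi_0^a)$. Applying the central limit theorem to the i.i.d.\ summand $(\phi_1^y - \phi_0^y) - \theta(\phi_1^a - \phi_0^a)$ then yields
$$ \sqrt{n}(\widehat\theta - \theta) \indist N\left( 0, \var\left\{ \frac{(\phi_1^y - \phi_0^y) - \theta(\phi_1^a - \phi_0^a)}{\Pb(\phi_1^a - \phi_0^a)} \right\} \right) . $$

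I expect the main obstacle to be the control of the random denominator: one must verify both that $\widehat D$ stays bounded away from zero (secured here by the stated positivity condition on $\Pn(\widehat\phi_1^a - \widehat\phi_0^a)$, together with relevance (A7) giving $D \geq \epsilon$) and that $\widehat D \inprob D$, since otherwise both the ratio linearization and the Slutsky step break down. Everything else is bookkeeping; the only genuinely new point relative to Theorem \ref{trtest} is checking that the numerator and denominator share a single common i.i.d.\ influence-function representation, which they do, being the two coordinates of the same vector estimator $\widehat\bpsi_1 - \widehat\bpsi_0$, so that their leading terms combine into the single summand appearing in the limiting variance.
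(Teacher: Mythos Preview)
Your proposal is correct and follows essentially the same approach as the paper: you use the exact ratio identity $\widehat\theta - \theta = \widehat D^{-1}\{(\widehat N - N) - \theta(\widehat D - D)\}$, plug in the three-term decomposition and remainder bound from Theorem~\ref{trtest} for each of $\widehat N - N$ and $\widehat D - D$, control the empirical-process piece via Lemma~\ref{splitlem}, and then replace $1/\widehat D$ by $1/D$ using consistency of $\widehat D$ plus Slutsky and the CLT. The paper's proof is organized slightly differently (it writes the asymptotic linear expansion in one display rather than separating the rate and distributional steps), but the ingredients and logic are the same.
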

\begin{proof}
Note that we have
\begin{align*}
\widehat\theta - \theta &= \frac{ \Pn( \widehat\phi_1^y - \widehat\phi_0^y ) }{ \Pn( \widehat\phi_1^a - \widehat\phi_0^a )  }  - \frac{ \Pb( \phi_1^y - \phi_0^y ) }{ \Pb( \phi_1^a - \phi_0^a )  }  \\
&= \frac{ 1 }{ \Pn( \widehat\phi_1^a - \widehat\phi_0^a )  } \left[ \Big\{ \Pn( \widehat\phi_1^y - \widehat\phi_0^y ) - \Pb( \phi_1^y - \phi_0^y ) \Big\} - \theta \Big\{ \Pn( \widehat\phi_1^a - \widehat\phi_0^a ) - \Pb( \phi_1^a - \phi_0^a ) \Big\} \right] \\
&= \Pn\left\{ \frac{ ( \phi_1^y - \phi_0^y ) - \theta ( \phi_1^a - \phi_0^a )}{\Pb( \phi_1^a - \phi_0^a )} \right\} + o_\Pb(1/\sqrt{n}) \\
& \hspace{.3in} + O_\Pb\left( \| \widehat\pi - \pi \| \max_z \| \widehat\lambda_z - \lambda_z \| + \max_{z,t} \left\{ \left( \| \widehat\bbeta_z^t - \bbeta_z^t \| + \| \widehat\gamma_z - \gamma_z \| \right) \| \widehat\gamma_z - \gamma_z \| \right\} \right) 
\end{align*}
where the third line follows since 
$$ \Pn(\widehat\phi^t ) - \Pb( \phi^t ) = (\Pn-\Pb) (\widehat\phi^t - \phi^t) + (\Pn-\Pb) \phi^t + \Pb(\widehat\phi^t - \phi^t) $$
with the first term $o_\Pb(1/\sqrt{n})$ by Lemma \ref{splitlem} and the third remainder term from Theorem \ref{trtest}, and since $\Pn( \widehat\phi_1^a - \widehat\phi_0^a )$ is bounded away from zero with 
\begin{align*} 
\Pn( \widehat\phi_1^a - \widehat\phi_0^a ) - \Pb( \phi_1^a - \phi_0^a ) &=  (\Pn-\Pb)( \widehat\phi_1^a - \widehat\phi_0^a )  + \Pb\{ ( \widehat\phi_1^a - \widehat\phi_0^a ) - ( \phi_1^a - \phi_0^a )\} \\
&= O_\Pb(1/\sqrt{n}) + \max_z \| \widehat\phi_z^a - \phi_z^a \| = o_\Pb(1)
\end{align*}
where the second equality follows from Lemma \ref{splitlem} and the central limit theorem.
\end{proof}

As before, the estimator $\widehat\theta$ has a fast convergence rate that is second-order involving products of nuisance errors, so that under for example $n^{-1/4}$-type rates the estimator will be $\sqrt{n}$-consistent, asymptotically normal, and efficient. It is also doubly robust, as pointed out in the next corollary.

\begin{corollary} \label{ivconsistency}
Under the conditions of Theorem \ref{ivest}, the estimator $\widehat\theta$ is consistent if either
\begin{enumerate}
\item $\| \widehat\gamma_z - \gamma_z \|=o_\Pb(1)$ for $z \in \{0,1\}$ and $\| \widehat\pi - \pi \| = o_\Pb(1)$, or
\item $\| \widehat\gamma_z - \gamma_z \|=o_\Pb(1)$ and $\| \widehat\lambda_z - \lambda_z \|=o_\Pb(1)$, for $z \in \{0,1\}$.
\end{enumerate}
\end{corollary}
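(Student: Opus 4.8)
The plan is to reduce consistency of the ratio estimator $\widehat\theta$ to consistency of its four component estimators $\Pn(\widehat\phi_z^t)$, for $z \in \{0,1\}$ and $t \in \{a,y\}$, and then to invoke the continuous mapping theorem. Since the map $(u,v)\mapsto u/v$ is continuous wherever $v\neq 0$, and the denominator $\Pn(\widehat\phi_1^a-\widehat\phi_0^a)$ is bounded away from zero by assumption (with population limit $\Pb(\phi_1^a-\phi_0^a)=\E(A^{z=1}-A^{z=0})>\epsilon$ under relevance and monotonicity), it suffices to show $\Pn(\widehat\phi_z^t)\inprob \Pb(\phi_z^t)$ for every $z,t$ under either listed condition. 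The numerator and denominator then converge to $\Pb(\phi_1^y-\phi_0^y)$ and $\Pb(\phi_1^a-\phi_0^a)$, whose ratio is exactly $\theta$ by Proposition \ref{ivident}.

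For each component I would use the same two-term decomposition as in the proof of Theorem \ref{trtest}, namely
\begin{equation*}
\Pn(\widehat\phi_z^t) - \Pb(\phi_z^t) = (\Pn-\Pb)(\widehat\phi_z^t) + \Pb(\widehat\phi_z^t - \phi_z^t).
\end{equation*}
Conditional on the independent nuisance sample the first term is a centered average of a fixed function, so by a Chebyshev argument in the spirit of Lemma \ref{splitlem} it is $O_\Pb(1/\sqrt{n})=o_\Pb(1)$ as soon as $\var(\widehat\phi_z^t)=O_\Pb(1)$; crucially, here only a finite-variance bound is needed, not the stronger $\|\widehat\phi_z^t-\phi_z^t\|=o_\Pb(1)$ used for the central limit theorem. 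By the von Mises expansion of Lemma \ref{vonmises}, the second term equals the remainder $\bR_z^t(\widehat\Pb,\Pb)$, so the whole problem collapses to showing $\bR_z^t(\widehat\Pb,\Pb)=o_\Pb(1)$ (equivalently $S_{n,z}=o_\Pb(1)$) under each condition.

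The key step is to read the double robustness off the two-term structure of $\bR_z$. The second term is, up to factors bounded by the positivity conditions, a product in which $(\gamma_z-\widehat\gamma_z)$ appears twice, so it is $o_\Pb(1)$ as soon as $\|\widehat\gamma_z-\gamma_z\|=o_\Pb(1)$, regardless of whether $\widehat\bbeta_z$ is consistent (only a boundedness bound on $\widehat\bbeta_z$ is used, via Cauchy--Schwarz); both conditions assume exactly this $\gamma$-consistency, which is why neither requires consistency of $\widehat\bbeta_z$. The first term is, again up to bounded factors, the cross product $(\pi-\widehat\pi)(\lambda_z-\widehat\lambda_z)$, which is $o_\Pb(1)$ if either $\|\widehat\pi-\pi\|=o_\Pb(1)$ (condition 1) or $\|\widehat\lambda_z-\lambda_z\|=o_\Pb(1)$ (condition 2). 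Hence under either listed condition every $\bR_z^t(\widehat\Pb,\Pb)\inprob 0$, each component estimator is consistent, and consistency of $\widehat\theta$ follows.

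The main obstacle is the boundedness bookkeeping needed to declare the surviving factor in each product $O_\Pb(1)$: one uses $\lambda_z,\widehat\lambda_z\in[0,1]$, the positivity bounds keeping $\pi,\widehat\pi,\gamma_z,\widehat\gamma_z$ away from $0$ and $1$, and a bound on $\bbeta_z,\widehat\bbeta_z$ (immediate for the $t=a$ component since $A\in\{0,1\}$, and otherwise requiring a second-moment condition on $Y$). A related subtlety worth flagging is that under condition 1 the estimator $\widehat\lambda_z$ may be inconsistent, so $\widehat\phi_z^t$ need not converge to $\phi_z^t$ in $L_2$; this is precisely why the empirical-process term must be controlled by a variance bound rather than by the $o_\Pb(1)$-influence-function assumption, and why the conclusion available here is consistency rather than asymptotic normality.
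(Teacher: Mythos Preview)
Your argument is correct and, at its core, matches the paper's intended reasoning. The paper gives no separate proof: the corollary is meant to follow immediately from the first conclusion of Theorem \ref{ivest}, namely $\widehat\theta - \theta = O_\Pb(n^{-1/2} + S_{n,0} + S_{n,1})$, by observing that each product in $S_{n,z}$ contains a factor that is $o_\Pb(1)$ under either listed condition while the surviving factor is bounded. You reach the same endpoint by showing component-wise consistency of $\Pn(\widehat\phi_z^t)$ and applying the continuous mapping theorem to the ratio; this is equivalent but a bit longer than simply citing the rate display.

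One difference worth noting: you deliberately control the empirical-process term via a variance bound rather than the hypothesis $\|\widehat\phi_z^t - \phi_z^t\| = o_\Pb(1)$. Since the corollary is stated ``under the conditions of Theorem \ref{ivest}'', that $L_2$-consistency hypothesis is already in force, so strictly speaking your extra care is unnecessary. It does, however, expose a genuine tension in the corollary as written: if condition 1 permits $\widehat\lambda_z$ to be inconsistent, then $\|\widehat\phi_z^t - \phi_z^t\| = o_\Pb(1)$ will typically fail, rendering that branch vacuous under the literal hypotheses. Your weakened argument therefore proves what the double-robustness claim is really meant to say, under milder boundedness assumptions (finite variance of $\widehat\phi_z^t$ and a bound on $\widehat\bbeta_z^y$), which is a useful strengthening.
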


To summarize, the above results extend the work of \citet{mogstad2012instrumental, chaudhuri2016gmm}, and \citet{kennedy2017paradoxes}, by providing an efficient nonparametric estimator of the instrumental variable estimand when some instrument values are missing, allowing adjustment for complex confounding via flexible data-adaptive estimators of the nuisance functions. 

\section{Discussion}

In this paper we filled a gap in the literature by considering nonparametric identification, efficiency theory, and estimation of average treatment effects in the presence of complex confounding and missing exposure values, where the exposure missingness can depend not only on the covariates but also the outcome information. We derived the efficient influence function for the average treatment effect and corresponding nonparametric efficiency bounds, and constructed nonparametric estimators can attain these efficiency bounds under weak rate conditions on the nuisance estimators. This allows one to incorporate modern flexible regression and machine learning tools. We also apply our general results to the problem of causal inference with a partially missing instrumental variable, yielding a new estimator and efficiency bound in this problem as well. 

There are several important avenues for future work. First, it will be useful to study finite-sample properties of the estimators proposed here, in comparison to the more parametric estimators proposed in earlier work. Relatedly, it would be useful to construct an efficient plug-in estimator using targeted maximum likelihood \citep{van2006targeted, van2011targeted}, which would respect bounds on the parameter space, e.g., when $Y$ is bounded. Second, we restricted study to possibly multi-valued but discrete point treatments; it would be of interest to extend to treatments that are continuous \citep{diaz2012population, kennedy2017nonparametric} or time-varying \citep{robins2000marginal, kennedy2019nonparametric}. This would also be useful for continuous instrumental variable problems \citep{kennedy2019robust} with instrument missingness. Further, identification, efficiency theory, and estimation are all more complicated in settings where there is simultaneous missingness in covariates, treatment, and outcome \citep{sun2018inverse}; however, this also occurs often in practice and deserves deeper investigation. \black{Lastly, we assumed exchangeability in the sense of the missing indicator $R$ being conditionally independent of the underlying exposure $Z$ given both covariates $\mathbf{X}$ and  outcome $\mathbf{Y}$; it would be of interest to consider the case where we only assume $R \ind Z \mid \mathbf{X}$. However, there average treatment effects are no longer point identified, and so one would need to consider bounds and/or sensitivity analysis.}

\section{Acknowledgements}

The author thanks Matteo Bonvini, Dylan Small, Mike Daniels, and Joe Hogan for helpful discussions, as well as an anonymous reviewer of a previous version of the manuscript.

\section*{References}
\vspace{-1cm}
\bibliographystyle{abbrvnat}
\bibliography{/Volumes/flashdrive/research/bibliography}

\begin{thebibliography}{35}
\providecommand{\natexlab}[1]{#1}
\providecommand{\url}[1]{\texttt{#1}}
\expandafter\ifx\csname urlstyle\endcsname\relax
  \providecommand{\doi}[1]{doi: #1}\else
  \providecommand{\doi}{doi: \begingroup \urlstyle{rm}\Url}\fi

\bibitem[Abadie(2003)]{abadie2003semiparametric}
A.~Abadie.
\newblock Semiparametric instrumental variable estimation of treatment response
  models.
\newblock \emph{Journal of Econometrics}, 113\penalty0 (2):\penalty0 231--263,
  2003.

\bibitem[Ahn et~al.(2011)Ahn, Mukherjee, Gruber, and Sinha]{ahn2011missing}
J.~Ahn, B.~Mukherjee, S.~B. Gruber, and S.~Sinha.
\newblock Missing exposure data in stereotype regression model: application to
  matched case--control study with disease subclassification.
\newblock \emph{Biometrics}, 67\penalty0 (2):\penalty0 546--558, 2011.

\bibitem[Bickel et~al.(1993)Bickel, Klaassen, Ritov, and
  Wellner]{bickel1993efficient}
P.~J. Bickel, C.~A. Klaassen, Y.~Ritov, and J.~A. Wellner.
\newblock \emph{Efficient and Adaptive Estimation for Semiparametric Models}.
\newblock Baltimore: Johns Hopkins University Press, 1993.

\bibitem[Burgess et~al.(2011)Burgess, Seaman, Lawlor, Casas, and
  Thompson]{burgess2011missing}
S.~Burgess, S.~Seaman, D.~A. Lawlor, J.~P. Casas, and S.~G. Thompson.
\newblock Missing data methods in {Mendelian} randomization studies with
  multiple instruments.
\newblock \emph{American Journal of Epidemiology}, 174\penalty0 (9):\penalty0
  1069--1076, 2011.

\bibitem[Chaudhuri and Guilkey(2016)]{chaudhuri2016gmm}
S.~Chaudhuri and D.~K. Guilkey.
\newblock {GMM} with multiple missing variables.
\newblock \emph{Journal of Applied Econometrics}, 31\penalty0 (4):\penalty0
  678--706, 2016.

\bibitem[Chernozhukov et~al.(2016)Chernozhukov, Chetverikov, Demirer, Duflo,
  Hansen, Newey, and Robins]{chernozhukov2016double}
V.~Chernozhukov, D.~Chetverikov, M.~Demirer, E.~Duflo, C.~Hansen, W.~Newey, and
  J.~M. Robins.
\newblock Double machine learning for treatment and causal parameters.
\newblock \emph{arXiv preprint arXiv:1608.00060}, 2016.

\bibitem[D{\'\i}az and {van der Laan}(2012)]{diaz2012population}
I.~D{\'\i}az and M.~J. {van der Laan}.
\newblock Population intervention causal effects based on stochastic
  interventions.
\newblock \emph{Biometrics}, 68\penalty0 (2):\penalty0 541--549, 2012.

\bibitem[Farrell(2015)]{farrell2015robust}
M.~H. Farrell.
\newblock Robust inference on average treatment effects with possibly more
  covariates than observations.
\newblock \emph{Journal of Econometrics}, 189\penalty0 (1):\penalty0 1--23,
  2015.

\bibitem[Hern{\'a}n and Robins(2006)]{hernan2006instruments}
M.~A. Hern{\'a}n and J.~M. Robins.
\newblock Instruments for causal inference: an epidemiologist's dream?
\newblock \emph{Epidemiology}, 17\penalty0 (4):\penalty0 360--372, 2006.

\bibitem[Imbens(2004)]{imbens2004nonparametric}
G.~W. Imbens.
\newblock Nonparametric estimation of average treatment effects under
  exogeneity: A review.
\newblock \emph{Review of Economics and Statistics}, 86\penalty0 (1):\penalty0
  4--29, 2004.

\bibitem[Imbens and Angrist(1994)]{imbens1994identification}
G.~W. Imbens and J.~D. Angrist.
\newblock Identification and estimation of local average treatment effects.
\newblock \emph{Econometrica}, 62\penalty0 (2):\penalty0 467--475, 1994.

\bibitem[Kennedy(2019)]{kennedy2019nonparametric}
E.~H. Kennedy.
\newblock Nonparametric causal effects based on incremental propensity score
  interventions.
\newblock \emph{Journal of the American Statistical Association}, 114\penalty0
  (526):\penalty0 645--656, 2019.

\bibitem[Kennedy and Small(2017)]{kennedy2017paradoxes}
E.~H. Kennedy and D.~S. Small.
\newblock Paradoxes in instrumental variable studies with missing data and
  one-sided noncompliance.
\newblock \emph{Journal of the French Statistical Society (to appear)}, 2017.

\bibitem[Kennedy et~al.(2017)Kennedy, Ma, McHugh, and
  Small]{kennedy2017nonparametric}
E.~H. Kennedy, Z.~Ma, M.~D. McHugh, and D.~S. Small.
\newblock Nonparametric methods for doubly robust estimation of continuous
  treatment effects.
\newblock \emph{Journal of the Royal Statistical Society: Series B},
  79\penalty0 (4):\penalty0 1229--1245, 2017.

\bibitem[Kennedy et~al.(2019{\natexlab{a}})Kennedy, Balakrishnan, and
  G'Sell]{kennedy2019sharp}
E.~H. Kennedy, S.~Balakrishnan, and M.~G'Sell.
\newblock Sharp instruments for classifying compliers and generalizing causal
  effects.
\newblock \emph{The Annals of Statistics (to appear)}, 2019{\natexlab{a}}.

\bibitem[Kennedy et~al.(2019{\natexlab{b}})Kennedy, Lorch, and
  Small]{kennedy2019robust}
E.~H. Kennedy, S.~Lorch, and D.~S. Small.
\newblock Robust causal inference with continuous instruments using the local
  instrumental variable curve.
\newblock \emph{Journal of the Royal Statistical Society: Series B},
  81\penalty0 (1):\penalty0 121--143, 2019{\natexlab{b}}.

\bibitem[Mebane~Jr and Poast(2013)]{mebane2013causal}
W.~R. Mebane~Jr and P.~Poast.
\newblock Causal inference without ignorability: Identification with nonrandom
  assignment and missing treatment data.
\newblock \emph{Political Analysis}, 21\penalty0 (2):\penalty0 233--251, 2013.

\bibitem[Mogstad and Wiswall(2012)]{mogstad2012instrumental}
M.~Mogstad and M.~Wiswall.
\newblock Instrumental variables estimation with partially missing instruments.
\newblock \emph{Economics Letters}, 114\penalty0 (2):\penalty0 186--189, 2012.

\bibitem[Molinari(2010)]{molinari2010missing}
F.~Molinari.
\newblock Missing treatments.
\newblock \emph{Journal of Business \& Economic Statistics}, 28\penalty0
  (1):\penalty0 82--95, 2010.

\bibitem[Robins(2000)]{robins2000robust}
J.~M. Robins.
\newblock Robust estimation in sequentially ignorable missing data and causal
  inference models.
\newblock \emph{Proceedings of the American Statistical Association},
  1999:\penalty0 6--10, 2000.

\bibitem[Robins et~al.(1994)Robins, Rotnitzky, and Zhao]{robins1994estimation}
J.~M. Robins, A.~Rotnitzky, and L.~P. Zhao.
\newblock Estimation of regression coefficients when some regressors are not
  always observed.
\newblock \emph{Journal of the American Statistical Association}, 89\penalty0
  (427):\penalty0 846--866, 1994.

\bibitem[Robins et~al.(2000)Robins, Hernan, and Brumback]{robins2000marginal}
J.~M. Robins, M.~A. Hernan, and B.~Brumback.
\newblock Marginal structural models and causal inference in epidemiology.
\newblock \emph{Epidemiology}, 11\penalty0 (5):\penalty0 550--560, 2000.

\bibitem[Robins et~al.(2008)Robins, Li, {Tchetgen Tchetgen}, and {van der
  Vaart}]{robins2008higher}
J.~M. Robins, L.~Li, E.~J. {Tchetgen Tchetgen}, and A.~W. {van der Vaart}.
\newblock Higher order influence functions and minimax estimation of nonlinear
  functionals.
\newblock \emph{Probability and Statistics: Essays in Honor of David A.
  Freedman}, pages 335--421, 2008.

\bibitem[Scharfstein et~al.(1999)Scharfstein, Rotnitzky, and
  Robins]{scharfstein1999adjusting}
D.~O. Scharfstein, A.~Rotnitzky, and J.~M. Robins.
\newblock Adjusting for nonignorable drop-out using semiparametric nonresponse
  models.
\newblock \emph{Journal of the American Statistical Association}, 94\penalty0
  (448):\penalty0 1096--1120, 1999.

\bibitem[Shardell and Hicks(2014)]{shardell2014statistical}
M.~Shardell and G.~E. Hicks.
\newblock Statistical analysis with missing exposure data measured by proxy
  respondents: a misclassification problem within a missing-data problem.
\newblock \emph{Statistics in Medicine}, 33\penalty0 (25):\penalty0 4437--4452,
  2014.

\bibitem[Shortreed and Forbes(2010)]{shortreed2010missing}
S.~M. Shortreed and A.~B. Forbes.
\newblock Missing data in the exposure of interest and marginal structural
  models: a simulation study based on the framingham heart study.
\newblock \emph{Statistics in Medicine}, 29\penalty0 (4):\penalty0 431--443,
  2010.

\bibitem[Sun and {Tchetgen Tchetgen}(2018)]{sun2018inverse}
B.~Sun and E.~J. {Tchetgen Tchetgen}.
\newblock On inverse probability weighting for nonmonotone missing at random
  data.
\newblock \emph{Journal of the American Statistical Association}, 113\penalty0
  (521):\penalty0 369--379, 2018.

\bibitem[Tsiatis(2006)]{tsiatis2006semiparametric}
A.~A. Tsiatis.
\newblock \emph{Semiparametric Theory and Missing Data}.
\newblock New York: Springer, 2006.

\bibitem[{van der Laan} and Robins(2003)]{van2003unified}
M.~J. {van der Laan} and J.~M. Robins.
\newblock \emph{Unified Methods for Censored Longitudinal Data and Causality}.
\newblock New York: Springer, 2003.

\bibitem[{van der Laan} and Rose(2011)]{van2011targeted}
M.~J. {van der Laan} and S.~Rose.
\newblock \emph{Targeted Learning: Causal Inference for Observational and
  Experimental Data}.
\newblock Springer, 2011.

\bibitem[{van der Laan} and Rubin(2006)]{van2006targeted}
M.~J. {van der Laan} and D.~B. Rubin.
\newblock Targeted maximum likelihood learning.
\newblock \emph{UC Berkeley Division of Biostatistics Working Paper Series},
  212, 2006.

\bibitem[{van der Vaart}(2002)]{van2002semiparametric}
A.~W. {van der Vaart}.
\newblock Semiparametric statistics.
\newblock \emph{In: Lectures on Probability Theory and Statistics}, pages
  331--457, 2002.

\bibitem[Williamson et~al.(2012)Williamson, Forbes, and
  Wolfe]{williamson2012doubly}
E.~Williamson, A.~Forbes, and R.~Wolfe.
\newblock Doubly robust estimators of causal exposure effects with missing data
  in the outcome, exposure or a confounder.
\newblock \emph{Statistics in Medicine}, 31\penalty0 (30):\penalty0 4382--4400,
  2012.

\bibitem[Zhang et~al.(2016)Zhang, Liu, Zhang, Tang, and Zhang]{zhang2016causal}
Z.~Zhang, W.~Liu, B.~Zhang, L.~Tang, and J.~Zhang.
\newblock Causal inference with missing exposure information: methods and
  applications to an obstetric study.
\newblock \emph{Statistical Methods in Medical Research}, 25\penalty0
  (5):\penalty0 2053--2066, 2016.

\bibitem[Zheng and {van der Laan}(2010)]{zheng2010asymptotic}
W.~Zheng and M.~J. {van der Laan}.
\newblock Asymptotic theory for cross-validated targeted maximum likelihood
  estimation.
\newblock \emph{UC Berkeley Division of Biostatistics Working Paper Series},
  Paper 273:\penalty0 1--58, 2010.

\end{thebibliography}

\newpage
\setcounter{page}{1}
\appendix

\section{Appendix}

The following lemma from \citet{kennedy2019sharp}  is useful in proving Theorem \ref{trtest}.

\begin{lemma} \label{splitlem}
Let $\widehat{f}(\bo)$ be a function estimated from a sample $\bO^N=(\bO_{n+1},...,\bO_N)$, and let $\Pn$ denote the empirical measure over $(\bO_1,...,\bO_n)$, which is independent of $\bO^N$. Then 
$$ (\Pn-\Pb) (\widehat{f}-f) = O_\Pb\left( \frac{ \| \widehat{f}-f \| }{\sqrt{n}}  \right) . $$ 
\end{lemma}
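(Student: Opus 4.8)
The plan is to exploit the independence between the estimation sample $\bO^N$ and the evaluation sample $(\bO_1,\dots,\bO_n)$ so as to collapse the empirical process term into a conditionally centered average of iid summands, for which an elementary second-moment bound is enough. The decisive observation is that, conditionally on $\bO^N$, the estimate $\widehat f$ is a fixed (non-random) function, so that $g := \widehat f - f$ may be treated as deterministic; consequently no complexity or Donsker-type restriction on the class of candidate estimators is required.

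First I would condition on $\bO^N$ and write $(\Pn-\Pb)(\widehat f - f) = \frac{1}{n}\sum_{i=1}^n\{g(\bO_i) - \Pb g\}$, noting that given $\bO^N$ the summands are iid with mean zero, since the $\bO_i$ are independent of $\bO^N$ and hence of $g$; thus the conditional mean of the whole term vanishes. Next I would compute the conditional second moment. Because the summands are iid and centered given $\bO^N$,
$$ \E\Big[ \big\{ (\Pn-\Pb)(\widehat f - f)\big\}^2 \,\Big|\, \bO^N \Big] = \frac{1}{n}\var_\Pb\{ g(\bO)\} \le \frac{1}{n}\Pb(g^2) = \frac{\|\widehat f - f\|^2}{n}, $$
where I use that, conditionally on $\bO^N$, the quantity $\|\widehat f - f\|^2 = \Pb(g^2)$ is a fixed number and that $\var_\Pb(g)\le \Pb(g^2)$.

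Dividing through, the normalized ratio $\{(\Pn-\Pb)(\widehat f - f)\}\big/\{\|\widehat f - f\|/\sqrt n\}$ has conditional second moment at most $1$; integrating out $\bO^N$, its unconditional second moment is also at most $1$. A final application of Markov's inequality then yields tightness of this ratio, i.e. it is $O_\Pb(1)$, which is precisely the claimed bound. (On the degenerate event $\|\widehat f - f\| = 0$ the numerator vanishes as well, and the ratio may be assigned the value zero.)

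The main obstacle here is conceptual rather than computational: it is recognizing that the sample-splitting construction is exactly what permits one to bypass empirical process theory. Without the independence of the two samples one could not treat $g$ as fixed, and controlling $(\Pn-\Pb)(\widehat f - f)$ would instead require bounding the oscillation of the empirical process uniformly over an entire function class, e.g.\ via entropy or Donsker conditions. The only remaining technical care is the random normalizer $\|\widehat f - f\|/\sqrt n$, which is handled cleanly by bounding the second moment of the \emph{ratio} conditionally on $\bO^N$ before taking expectations.
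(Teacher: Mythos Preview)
Your argument is correct and essentially identical to the paper's own proof: both condition on $\bO^N$, note the term is a centered iid average with conditional variance at most $\|\widehat f-f\|^2/n$, and then finish with a Chebyshev/Markov bound on the normalized ratio. The only cosmetic difference is that the paper applies Chebyshev to the conditional probability before taking expectations, whereas you bound the unconditional second moment of the ratio and then invoke Markov; these are equivalent.
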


\begin{proof}
First note that, conditional on $\bO^N$, the term in question has mean zero since
$$ \E\Big\{ \Pn(\widehat{f}-f) \Bigm| \bO^N \Big\}  = \E(\widehat{f}-f \mid \bO^N) = \Pb(\widehat{f}-f) . $$
The conditional variance is
\begin{align*}
\var\Big\{ (\Pn-\Pb) (\widehat{f}-f) \Bigm| \bO^N \Big\} &=  \var\Big\{ \Pn(\widehat{f}-f) \Bigm| \bO^N \Big\} = \frac{1}{n} \var(\widehat{f}-f \mid  \bO^N ) \leq \|\widehat{f}-f\|^2 /n .
\end{align*}
Therefore using Chebyshev's inequality we have
\begin{align*}
\Pb\left\{ \frac{ | (\Pn-\Pb)(\widehat{f}-f) | }{ \| \widehat{f}-f \| / \sqrt{n} } \geq t \right\} &= \E\left[ \Pb\left\{ \frac{ | (\Pn-\Pb)(\widehat{f}-f) | }{ \| \widehat{f}-f \| / \sqrt{n} } \geq t \Bigm| \bO^N \right\} \right] \leq \frac{1}{t^2} .
\end{align*}
Thus for any $\epsilon>0$ we can pick $t=1/\sqrt{\epsilon}$ so that the probability above is no more than $\epsilon$, which yields the result.
\end{proof}

\end{document}